\newtheorem{thm}{Theorem}
\newtheorem{lem}[]{Lemma}
\newtheorem{prop}[]{Proposition}
\theoremstyle{definition}
\newtheorem{defi}[]{Definition}
\newtheorem{exam}[]{Example}
\newcommand{\lat}{\mathsf{L}} 
\newcommand{\vnl}{\mathsf{vNL}} 
\newcommand{\ep}{\varepsilon} 
\newcommand{\bR}{\mathbb{R}}
\newcommand{\ketbra}[1]{\ket{#1}\!\bra{#1}} 
\DeclareMathOperator{\lspan}{span}
\DeclareMathOperator{\cspan}{\overline{span}}
\DeclareMathOperator{\im}{im}
\newcommand{\RF}{Riesz-Fischer}
\newcommand{\seq}[2]{\paren{#1}_{#2}}
\newcommand{\seqi}[1]{\paren{#1}_{i\in I}}
\newcommand{\abs}[1]{{\left\vert #1 \right\vert}}
\newcommand{\paren}[1]{{\left( #1 \right)}}
\newcommand{\brac}[1]{{\left\{ #1 \right\}}}
\newcommand{\C}{{\mathbb C}}
\newcommand{\N}{{\mathbb N}}
\newcommand{\bZ}{{\mathbb Z}}
\newcommand{\Ref}[1]{(\ref{#1})}
\newcommand{\longto}{\longrightarrow}
\renewcommand{\a}{\alpha}
\newcommand{\om}{\omega}
\newcommand{\Om}{\Omega}
\newcommand{\q}{\quad}
\newcommand{\inr}[2]{\langle#1,#2\rangle}
\newcommand{\tx}{\text}
\let\tx\mbox
\renewcommand{\r}{\rho} 
\renewcommand{\Im}{{\rm Im}}
\renewcommand{\leq}{\leqslant}
\renewcommand{\geq}{\geqslant}
\begin{document}

 \title
 {Distinguishability of countable quantum states and von Neumann lattice}

 \author{Ry\^uitir\^o Kawakubo$^1$ and Tatsuhiko Koike$^{1,2}$}
 \address{$^1$ Department of Physics, Keio University, Yokohama 223-8522, Japan}
 \address{$^2$ REC for NS, Keio University, Yokohama 223-8521, Japan}
 \ead{rkawakub@rk.phys.keio.ac.jp and koike@phys.keio.ac.jp}
 \vspace{10pt}
 \begin{indented}
  \item[]\today
 \end{indented}

 \begin{abstract}
Condition for distinguishability
of countably infinite
number of pure states by a single measurement is given.
Distinguishability is to be understood as possibility of
an unambiguous measurement.
For finite number of states,
it is known that the necessary and sufficient condition
of distinguishability is that the states are linearly independent.
For infinite number of states, several natural classes of distinguishability can be defined.
We give a necessary and sufficient condition for a system of pure
states to be distinguishable.                                                                             
It turns out that each level of distinguishability naturally
corresponds to one of the generalizations of
linear independence to families of infinite vectors.
As an important example,
we apply the general theory to von~Neumann's lattice,
a subsystem of coherent states which corresponds to a lattice in the classical phase space.	
We prove that the condition for distinguishability is that
the area of the fundamental region of the lattice is greater than the Planck
constant, and also find subtle behavior on the threshold.
These facts reveal the measurement theoretical meaning of the Planck
constant and give a justification for the interpretation
that it is the smallest unit of area in the phase space.
The cases of uncountably many states and of mixed states are
also discussed.
 \end{abstract}

 \vspace{2pc}
 \noindent{\it Keywords}: 
 foundations of quantum mechanics, 
 measurement theory, 
 quantum information, 
 unambiguous measurement, 
 von~Neumann lattice, 
 Riesz-Fischer sequence

 %%%%%%%%%%%%%%%%%%%%%%%%%%%%%%%%%%%%%%%%%%%%%%%%%%%%%%%%%%%%%%%%%%%
 \section{Introduction}

 Different states of a system are assumed to be distinguishable in classical mechanics. 
 This fundamental assumption, however, is abandoned in quantum mechanics.
 States cannot be distinguished without error unless they are orthogonal. 
 One would then like to consider a problem of distinguishing states in
 a given set, which is sometimes called state discrimination problem.

 Strategies for state discrimination can be classified into two types.
 In the first type, one makes a measurement with $n$ outcome in order to distinguish $n$ input states. 
 If a certain outcome is detected, we presume that the system is in the input state corresponding to that outcome. 
 This strategy is widely employed. 
 For example, Helstrom~\cite{Helstrom1967254} considered measurements of this type 
 to derive the minimum average error probability associated with the discrimination of arbitrary two states.
 In the second type, a measurement has $n+1$ outcomes. 
 The extra outcome corresponds to the answer ``do not know''.
 At the expense of this inconclusive outcome, 
 it is possible, under rather weak conditions on the input states,
 to distinguish the inputs \textit{with certainty} when the other $n$
 outcomes is obtained. 
 A measurement with this property is called an unambiguous
 measurement, which is the central subject of this paper. 
 Note that all measurements of the first type should be regarded as belonging to the second type,
 if we obliged to estimate the input even when we obtain the inconclusive outcome. 
 Consequently, unambiguous measurements cannot make the average error probability 
 smaller than that in measurements of the first type.
 In other words, unambiguous measurements are less suitable for quantitative studies of state discrimination.
 On the contrary, they are appropriate for qualitative studies, which we shall carry out in this paper.

 Unambiguous measurements for two pure states were discussed by Ivanovic~\cite{Ivanovic1987257} for the first time, 
 and later its theory was developed 
 by Dieks~\cite{Dieks1988303}, Peres~\cite{Peres198819}, Jaeger and Shimony~\cite{Jaeger199583}.
 Chefles~\cite{Chefles1998339} obtain a necessary and sufficient condition 
 that 
 a family consists of finitely many pure states can be unambiguously measured.
 The condition is the linear independence of the given family of states. 
 Sun et al.~\cite{PhysRevA.65.044306} and Eldar~\cite{1176618} 
 discuss optimal unambiguous measurements in relation to semidefinite programming.
 A necessary and sufficient condition for mixed states to be unambiguously measured was presented by Feng et al.~\cite{PhysRevA.70.012308}, 
 which is slightly complicated. 
 Note that all the studies above concern a family consisting of a finite number of states.

 Another subject of this paper is a von~Neumann lattice.
 A von~Neumann lattice is a family of states which corresponds to the
 lattice on the phase space in the classical mechanics. 
 This family is investigated in several contexts.
 von~Neumann~\cite{von1932mathematische} originally examines this family for simultaneous measurement of position and momentum.
 Gabor~\cite{5298517} discussed these families in the context of communication theory and electrical engineering, 
 which is a pioneering work in time-frequency analysis. 
 Interpolation problem for entire functions also has relation to von~Neumann lattices~\cite{Seip1992}.
 Properties of a von~Neumann lattice depend on the area of its fundamental region in the phase space. 
 Von~Neumann stated without proof that this family is complete when the area is roughly smaller than the Planck constant $h$.
 However, it was about 40 years later that 
 Perelomov~\cite{Perelomov:1971aa} and Bargmann et al.~\cite{Bargmann1971221} gave the proof for this fact. 
 Today, many of the properties have been revealed, which potentially offer measurement theoretical interpretations.

 In this paper, we investigate unambiguous measurements on countably many states.
 First, we develop a general theory of the distinguishability of countably many states.  
 We define a distinguishability of states as a possibility of unambiguous measurements on it.
 Then, we provide a necessary and sufficient condition for countable pure states to be distinguishable.
 We also consider uniform distinguishability, and give the maximum value of uniform success probability.
 We point out that there is a difference between distinguishability and
 uniform distinguishability in the case of infinitely many states. 
 Second, We apply the criterion of distinguishability developed in the first part of this paper to von~Neumann lattices.
 We find the measurement theoretical meaning of the Planck constant $h$, 
 the smallest unit of area in the classical phase space.  
 Depending on whether the spaces between the states is larger or smaller than the Planck constant, distinguishability of states changes drastically.

 This paper is organized as follows. 
 In Section~2, we define distinguishability of the states using unambiguous measurements.
 Then we briefly review properties of vectors in a Hilbert space, which can be considered
 as generalizations of liner independence, in Section~3.
 In Section~4, We show that distinguishability of countable pure states is equivalent to properties of vectors which we see in the previous section.
 Section~5 is devoted to investigations of von~Neumann lattices. 
 Conclusions and discussions are given in Section~6 and  Section~7.
 We discuss the case of uncountable states in \ref{counta}.

 %%%%%%%%%%%%%%%%%%%%%%%%%%%%%%%%%%%%%%%%%%%%%%%%%%%%%%%%%%%%%%%%%%%
 \section{Distinguishability}
 \label{dist}

 We shall discuss the problem of distinguishing a quantum state in a given family by a single measurement. 
 We allow an answer ``do not know'' or ``unknown'', but do not allow taking one state for another in the family. 
 The problem is referred to as unambiguous measurement. 
 We shall consider an arbitrary quantum system described by a Hilbert space $H$. 
 Let $\seqi{\rho_i}$ be a given family of countable states, where $\rho_i$'s are  density operators on $H$. 
 In our terminology, countable includes finite.
 We will often denote a family $\seqi{\ketbra{\psi_i}}$ of pure states simply by $\seqi{\psi_i}$ in the following.

 A quantum measurement and the resulting probability density of
 the outcome is described by a POVM (e.g.~\cite[\S3.1]{davies1976quantum}). 
 A POVM $\Pi=(\Pi_j)_{j\in J}$ on $J$, where $J$ is a countable set, 
 is a list of bounded operators $\Pi_j$ on $H$ that satisfies the positivity, 
 $\Pi_j\geq 0$ for all $j\in J$, and the normalization, $\sum_{j\in J} \Pi_j =1$. 
 The sum should be understood in the sense of 
 the weak operator topology. 
 the conditional probability of obtaining an outcome $j \in J$ when 
 the input was $\rho_i$ is given by 
 \begin{eqnarray}
  q_{ji}(\Pi):=\tr[\Pi_j \rho_i]. 
 \end{eqnarray}
 When $I\subset J$, 
 the success probability of obtaining the outcome $i\in I\subset J$ for the input
 $i\in I$ is given by 
 \begin{eqnarray}
  q_i(\Pi):=q_{ii}(\Pi)=\tr[\Pi_i \rho_i].
 \end{eqnarray}
 Though these quantities depend on $\seqi{\rho_i}$, we omit them in the notation since we usually fix a family $(\rho_i)_{i\in I}$ in our discussion.

 We shall define the distinguishability of each state in a given family of states. 
 %
 %
 %%%%%%%%%%%%%%%%%%%%%%%%%%%%%%%%%%%%%%%%%%%%%%%%%
 \begin{defi}[Distinguishability]
  \label{def-dist}
  Let $\seqi{\rho_i}$ be a countable family of states. 
  \begin{enumerate}
   \item 
                 \label{i-def-dist}
                 A POVM $\Pi=(\Pi_j)_{j\in J}$ 
                 {\em distinguishes}\/ (the states in) $\seqi{\r_i}$ 
                 if $J=I \sqcup \set{?}$, a disjoint union of $I$ and a set containing one element which we denote ``?'', and the following hold: 
                 \begin{eqnarray}
                  q_{ji}(\Pi)&=&0 \tx{ for all } i,j\in I, \q i\ne j, 
                   \\
                  q_i(\Pi)&>&0 \tx{ for all } i\in I.
                 \end{eqnarray}
   \item  
                  \label{i-def-unif}
                  A POVM $\Pi$ 
                  {\em uniformly distinguishes}\/ 
                  the states in $\seqi{\r_i}$ if $\Pi$ distinguishes  $\seqi{\r_i}$ and has constant $q_i(\Pi)$. 
                  The constant is called the {\em uniform success probability}. 
   \item 
                 \label{i-def-perf}
                 A POVM $\Pi$ 
                 {\em perfectly distinguishes}\/ 
                 the states in $\seqi{\r_i}$ if $\Pi$ distinguishes  $\seqi{\r_i}$ and $q_i(\Pi)=1$ for all $i\in I$.
  \end{enumerate}
  The family $(\rho_i)_{i\in I}$ of states is said {\em distinguishable}, {\em uniformly distinguishable}, or {\em perfectly distinguishable}\/
  if there exists a POVM that distinguishes, uniformly distinguishes, or perfectly distinguishes, respectively, the states in $\seqi{\r_i}$. 
 \end{defi}

 It is obvious by definition that perfect distinguishability implies uniform distinguishability, and that uniform distinguishability implies distinguishability.

 Uniform distinguishability allows another characterization: 
 \ref{i-def-unif}$'$
 There exists a POVM $\Pi$ such that it distinguishes the states $(\rho_i)_{i\in I}$ and $\inf_{i\in I}q_i(\Pi) >0$ holds. 
 Necessity is trivial. 
 For sufficiency, assume that 
 a POVM $\Pi'$ satisfies the condition \ref{i-def-unif}$'$. 
 Let $q_i':= q_i(\Pi')$ and $q':=\inf q_i' >0$.
 Then the POVM $\Pi=(\Pi_j)_{j\in I\sqcup \set{?}}$, where 
 \begin{eqnarray}
  \Pi_j&:=&
   \frac{q'}{q_i'}\Pi_j', \q j\in I, 
   \nonumber \\
   \Pi_?&:=&
   \Pi_?' + \sum_{i\in I}\left(1-\frac{q'}{q_i'}\right) \Pi_i', 
   \nonumber 
 \end{eqnarray}
 satisfies the condition \ref{i-def-unif}, because its success probabilities $q_i(\Pi)=q'>0$ do not depend on $i\in I$.

 Distinguishability and uniform distinguishability are equivalent 
 if the family $(\rho_i)_{i\in I}$ consists of only finite number of states,
 because the infimum of finitely many  positive numbers is positive.

 The conditions equivalent to distinguishability and perfect distinguishability 
 were discussed for finite number of states in~\cite{Chefles1998339}. 
 Uniform distinguishability, which is different from the condition 
 only when the family contains infinite number of states, is newly defined in this paper. 
 In the case that the number of states is countable, we will derive the necessary and sufficient condition 
 for each type of distinguishability, which is the theme of this paper. 
 The assumption of countability is not a restriction if the Hilbert space is separable. 
 See the \ref{counta} for details.

 It is worth noting that all conditions defined here are invariant under any unitary transformation, 
 in particular, under any unitary time evolution. 
 In fact, if a POVM $\Pi=(\Pi_j)_{j\in I\sqcup \set{?}}$ distinguishes the states $(\rho_i)_{i\in I}$, 
 then the POVM $\Pi'=(U\Pi_j U^*)_{j\in I\sqcup \set{?}}$ distinguishes the states $(U\rho_iU^*)_{i\in I}$, for any unitary operator $U$.

 %%%%%%%%%%%%%%%%%%%%%%%%%%%%%%%%%%%%%%%%%%%%%%%%%%%%%%%%%%%%%%%%%%%
 \section{Properties of a family of vectors in a Hilbert space}
 \label{RF}

 In this section, we shall review some properties of a family of vectors in a Hilbert space,  
 which are related to the notion of linear independence. 
 Careful discussion is necessary if the number of vectors is infinite. 
 They will turn out to have substantial relation to distinguishability of quantum states in the next section. 
 Let $\lspan S$ be the minimal linear subspace containing $S$, where $S$ is a subset of $H$. 
 In other words, $\lspan S$ is the set of all linear combinations of finitely many elements of $S$. 
 Let $\cspan S$ be the norm closure of $\lspan S$. 
 %
 %
 %%%%%%%%%%%%%%%%%%%%%%%%%%%%%%%%%%%%%%%%%%%%%%%%%
 \begin{defi}[e.g. {\cite[p.28]{young1980introduction}}, 
  {\cite[Definition~3.1.2, p.135]{christensen2003introduction}}]
  \label{def-indep}
  Let $\seqi{\psi_i}$ be 
  a family of vectors in $H$. 
  \begin{enumerate}
   \item[(n)] 
			  \label{i-indep}
			  The family $\seqi{\psi_i}$ is {\em linearly independent}\/ 
			  if $\psi_i \notin \lspan\set{\psi_j\in H | j\ne i, \, j\in I }$ for each $i\in I$.
   \item
                \label{i-minimal}
                The family $\seqi{\psi_i}$ is {\em minimal}\/
                if $\psi_i \notin \cspan\set{\psi_j\in H | j\ne i, \, j\in I }$ for each $i\in I$. 
   \item
                \label{i-RF}
                The family $\seqi{\psi_i}$ is {\em Riesz-Fischer}\/ 
                if there exists $A>0$ such that 
                \begin{eqnarray}
                 A \sum_{i \in I} |\alpha_i|^2 
                  \leq \left\| \sum_{i\in I} \alpha_i \psi_i \right\|^2 
                  \label{eq-RFcond}
                \end{eqnarray} 
                holds for scalars $\seqi{\alpha_i}$ with all but finitely many being zero. 
                We call the positive number $A$ a {\em Riesz-Fischer bound}.
   \item
                \label{i-orth}
                The family $\seqi{\psi_i}$ is {\em orthonormal}\/ 
                if $\inr{\psi_j}{\psi_i}=\delta_{ij}$ holds for all $i,j\in I$. 
  \end{enumerate}
 \end{defi}

 Orthonormality obviously implies the {\RF} property. 
 The {\RF} property implies minimality,  
 because otherwise there would be 
 $i\in I$ such that $\psi_i\in\cspan\brac{\psi_j|j\ne i}$, 
 and one could make 
 the norm of $\psi_i -\sum_{j\ne i} \a_j\psi_j$ smaller than 
 any given positive number by choosing $\a_j\in\C$ accordingly, 
 thus violating \Ref{eq-RFcond} for any $A>0$. 
 Minimality implies linear independence by definition.

 In the particular case that $I$ is a finite set, 
 linear independence, minimality and the {\RF} property are all equivalent. 
 Indeed, if $I$ is finite, then $\set{ (\alpha_i)\in \mathbb{C}^I | \sum_{i\in I} |\alpha_i|^2=1}$ is compact, 
 and we deduce {\RF} property from linear independence.

 In our discussion on distinguishability,
 the dual of a family $\seqi{\psi_i}$ of vectors in $H$ play a crucial role.
  Two lists $(\psi_i)_{i\in I}$ and $(\phi_i)_{i\in I}$ of vectors in $H$ are said 
  dual or {\em biorthogonal}\/ to each other, if $\inr{\phi_j}{\psi_i}=\delta_{ij}$ for all $i,j\in I$. 
  The condition for the existence of a dual is given by the following proposition.
  For the proof, see \cite[p.28]{young1980introduction}, \cite[Lemma
 3.3.1]{christensen2003introduction}. 
 We also attach a proof in \ref{proof-lembi} for the convenience of
 the  reader. 
  %
  %
  %%%%%%%%%%%%%%%%%%%%%%%%%%%%%%%%%%%%%%%%%%%%%%%%%
  \begin{prop}
   \label{lembi}
   A family $(\psi_i)_{ i\in I}$ of vectors in $H$ admits a biorthogonal family if and only if it is minimal. 
   In that case, the biorthogonal family is unique
   if $(\psi_i)_{ i\in I}$ is complete i.e. $\cspan\set{\psi_i|i\in I}=H$. 
  \end{prop}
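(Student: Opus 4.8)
The plan is to prove the two implications of the equivalence directly from the definition of minimality, and then dispose of uniqueness by a one-line orthogonality argument.

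First I would prove the easy direction: if $(\psi_i)_{i\in I}$ admits a biorthogonal family $(\phi_i)_{i\in I}$, then it is minimal. Fix $i\in I$ and suppose, towards a contradiction, that $\psi_i\in\cspan\set{\psi_j|j\ne i,\,j\in I}$. Then there is a sequence of finite linear combinations of the $\psi_j$ with $j\ne i$ converging in norm to $\psi_i$. Applying the continuous linear functional $\inr{\phi_i}{\cdot}$ to this sequence gives the value $0$ at every stage, since $\inr{\phi_i}{\psi_j}=0$ for $j\ne i$, whereas passing to the limit yields $\inr{\phi_i}{\psi_i}=1$; this contradiction establishes minimality.

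For the converse, assume $(\psi_i)_{i\in I}$ is minimal. For each $i$ put $M_i:=\cspan\set{\psi_j|j\ne i,\,j\in I}$, a \emph{closed} subspace that, by minimality, does not contain $\psi_i$. Using the orthogonal decomposition $H=M_i\oplus M_i^{\perp}$, write $\psi_i=m_i+n_i$ with $m_i\in M_i$ and $n_i\in M_i^{\perp}$; minimality forces $n_i\ne 0$. I would then set $\phi_i:=n_i/\|n_i\|^2$ and verify directly that $\inr{\phi_i}{\psi_i}=\|n_i\|^2/\|n_i\|^2=1$ (using $n_i\perp m_i$) and $\inr{\phi_i}{\psi_j}=0$ for $j\ne i$ (since $\psi_j\in M_i$ and $n_i\in M_i^{\perp}$). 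Hence $(\phi_i)_{i\in I}$ is biorthogonal to $(\psi_i)_{i\in I}$. For uniqueness, suppose $(\psi_i)_{i\in I}$ is complete and that both $(\phi_i)_{i\in I}$ and $(\phi_i')_{i\in I}$ are biorthogonal to it. Then $\inr{\phi_i-\phi_i'}{\psi_j}=0$ for every $j\in I$, so $\phi_i-\phi_i'$ is orthogonal to $\cspan\set{\psi_j|j\in I}=H$ and therefore vanishes.

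The argument is essentially routine, so there is no serious obstacle; the one point that deserves care — and the reason the statement is phrased with minimality and the closure $\cspan$ rather than with linear independence and the plain $\lspan$ — is that the orthogonal decomposition used in the converse requires $M_i$ to be closed, and that $n_i\ne 0$ is precisely the assertion $\psi_i\notin M_i$. If one only assumed linear independence, $\psi_i$ could lie in $\cspan\set{\psi_j|j\ne i}$ without lying in $\lspan\set{\psi_j|j\ne i}$, and then no biorthogonal family could exist, by the easy direction.
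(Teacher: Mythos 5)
Your proof is correct, but it takes a different route from the paper's. The paper proves a more general statement (its Proposition in \ref{proof-lembi}) for an arbitrary normed space $X$: minimality is used to show that the functional $\phi'_i$ defined on $\lspan\set{\psi_j}$ by $\phi'_i\psi_j=\delta_{ij}$ has finite norm (the norm being the reciprocal of the distance from $\psi_i$ to $\lspan\set{\psi_j|j\ne i}$), and the Hahn--Banach theorem then extends it to an element of $X^*$; the Hilbert-space statement follows via the Riesz representation. You instead work directly in $H$ and use the orthogonal decomposition $H=M_i\oplus M_i^{\perp}$ with $M_i=\cspan\set{\psi_j|j\ne i}$, setting $\phi_i=n_i/\|n_i\|^2$ where $n_i$ is the component of $\psi_i$ orthogonal to $M_i$; minimality is exactly the statement $n_i\ne0$. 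This is more elementary (no Hahn--Banach, hence no appeal to Zorn-type arguments) and has the pleasant side effect of producing explicitly the biorthogonal family lying in $\cspan\set{\psi_i|i\in I}$, which is the canonical dual invoked later in Theorem~\ref{prop1} and Lemma~\ref{lemdual}; what it gives up is the extra generality of the normed-space version the paper records. Your easy direction (continuity of $\inr{\phi_i}{\cdot}$ against a sequence in the closed span, giving $1=0$) and your uniqueness argument (the difference $\phi_i-\phi_i'$ annihilates a dense span, hence all of $H$ when the family is complete) coincide in substance with the paper's.
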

  The Riesz-Fischer property has a dual notion. 
  \begin{defi}
   The list $(\phi_i)_{ i\in I }$ of vectors in $H$
   is {\em Bessel}\/ 
   if there exists $B<\infty $, called a {\em Bessel bound}, such that 
   the following equivalent conditions are fulfilled.
   \begin{enumerate}[label=\upshape{\arabic*.}]
        \item
                           For scalars $\seqi{\alpha_i}$ with all but finitely many being zero,
                           \begin{eqnarray}
                                \left\| \sum_{i\in I} \alpha_i \phi_i \right\|^2
                                 \leq B\sum_{i \in I} |\alpha_i|^2 
                                 \label{eq-bessel-def1}
                           \end{eqnarray} 
                           holds.
        \item
                           For each vector $\psi \in H$,
                           \begin{eqnarray}
                                \sum |\braket{\psi,\phi_i}|^2 \leq B\|\psi\|^2
                                 \label{eq-bessel-def2}
                           \end{eqnarray} 
                           holds.
   \end{enumerate}
  \end{defi}
  For the proof of equivalence of two conditions,
  see \cite[Chapter~4, Section~2, Theorem~3]{young1980introduction}.

  The {\RF} property allows characterizations by the dual 
  ( {\cite[Chapter 4]{young1980introduction}},  
  see also {\cite[Proposition~2.3 (ii)]{casazza-christensen-li-lindner1.pdf}}).
  %
  %
  %%%%%%%%%%%%%%%%%%%%%%%%%%%%%%%%%%%%%%%%%%%%%%%%% 
  \begin{prop}
   \label{lemyoung}
   For a family $\seqi{\psi_i}$ of vectors and $A>0$, the following conditions are equivalent.
   \begin{enumerate}[label=\upshape{\arabic*.}]
        \item $(\psi_i)_{i\in I}$ is {\RF} with bound $A$.
        \item $(\psi_i)_{i\in I}$ 
          admits a biorthogonal family which is Bessel with bound $A^ {-1} $.
        \item 
                          For each 
                          %%numerical sequence 
                          %%$(\gamma_i)_{i\in I}$ with 
                          $(\gamma_i)_{i\in I}\in \C^I$ with 
                          $\sum_{i\in I} |\gamma_i|^2 <\infty$, 
                          there exists $\phi\in H$ that satisfies 
                          $\braket{\phi,\psi_i}=\gamma_i$ and 
                          $\|\phi\|^2 \leq A^{-1}
                          %%$\sqrt{\sum |\gamma_i|^2}$. 
                          {\sum_{i\in I} |\gamma_i|^2}$. 
   \end{enumerate}
  \end{prop}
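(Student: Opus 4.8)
The plan is to prove the cyclic chain of implications $1\Rightarrow 3\Rightarrow 2\Rightarrow 1$. Throughout I set $M:=\cspan\brac{\psi_i|i\in I}$ and record one elementary fact used repeatedly: if a vector of $M$ with prescribed inner products $\inr{\cdot}{\psi_i}$ (one value for each $i\in I$) exists, then it is unique, for the difference of two candidates lies in $M$ while being orthogonal to every $\psi_i$, hence to all of $M$, and so vanishes.

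For $1\Rightarrow 3$ I would fix $(\gamma_i)_{i\in I}$ with $\sum_{i\in I}\abs{\gamma_i}^2<\infty$. The {\RF} bound $A$ forces $\sum_i\alpha_i\psi_i=0$ to imply $(\alpha_i)=0$, so the prescription $f\paren{\sum_i\alpha_i\psi_i}:=\sum_i\alpha_i\overline{\gamma_i}$ (for finitely supported $(\alpha_i)$) well-defines a linear functional $f$ on $\lspan\brac{\psi_i|i\in I}$. Cauchy--Schwarz followed by \Ref{eq-RFcond} gives $\abs{f(\sum_i\alpha_i\psi_i)}\leq\paren{\sum_i\abs{\gamma_i}^2}^{1/2}\paren{\sum_i\abs{\alpha_i}^2}^{1/2}\leq A^{-1/2}\paren{\sum_i\abs{\gamma_i}^2}^{1/2}\left\|\sum_i\alpha_i\psi_i\right\|$, so $f$ is bounded with norm at most $A^{-1/2}\paren{\sum_i\abs{\gamma_i}^2}^{1/2}$ and extends by continuity to $M$. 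The Riesz representation theorem on the Hilbert space $M$ then produces $\phi\in M$ with $f(x)=\inr{x}{\phi}$ on $M$ and $\|\phi\|$ equal to the norm of $f$; hence $\inr{\phi}{\psi_i}=\gamma_i$ for all $i$ and $\|\phi\|^2\leq A^{-1}\sum_i\abs{\gamma_i}^2$, which is condition~3. (The exact placement of complex conjugates depends on the convention for $\inr{\cdot}{\cdot}$ and is immaterial.)

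For $3\Rightarrow 2$ I would apply condition~3 to an arbitrary square-summable family of target values and replace the resulting $\phi$ by its orthogonal projection onto $M$; this leaves the values $\inr{\phi}{\psi_i}$ unchanged (since $\psi_i\in M$) and does not increase $\|\phi\|$, so I may take $\phi\in M$, where it is the unique such vector by the fact above. Taking the target values $(\delta_{ij})_{j\in I}$ yields $\phi_i\in M$ with $\inr{\phi_i}{\psi_j}=\delta_{ij}$, i.e.\ a family biorthogonal to $(\psi_i)_{i\in I}$. For finitely supported scalars $(\alpha_i)$ the vector $\sum_i\alpha_i\phi_i$ lies in $M$ and has values $(\alpha_i)$, so by uniqueness it equals the vector supplied by condition~3 for $(\alpha_i)$; therefore $\left\|\sum_i\alpha_i\phi_i\right\|^2\leq A^{-1}\sum_i\abs{\alpha_i}^2$, which is \Ref{eq-bessel-def1} with Bessel bound $A^{-1}$. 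Finally, for $2\Rightarrow 1$, let $(\phi_i)_{i\in I}$ be biorthogonal to $(\psi_i)_{i\in I}$ and Bessel with bound $A^{-1}$; for finitely supported $(\alpha_i)$ put $v:=\sum_i\alpha_i\psi_i$, note $\inr{\phi_j}{v}=\alpha_j$ (up to a conjugate) by biorthogonality, and conclude from \Ref{eq-bessel-def2} applied to $v$ that $\sum_j\abs{\alpha_j}^2=\sum_j\abs{\inr{\phi_j}{v}}^2\leq A^{-1}\|v\|^2$, i.e.\ \Ref{eq-RFcond} with {\RF} bound $A$.

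The one delicate point is $3\Rightarrow 2$: condition~3 on its own only controls each dual vector separately in norm, and combining such bounds through the triangle inequality yields an $\ell^1$-type rather than an $\ell^2$-type estimate for $\left\|\sum_i\alpha_i\phi_i\right\|$, which is too weak. Forcing the dual vectors to lie in $M=\cspan\brac{\psi_i|i\in I}$, where they are uniquely determined, makes the map from value-sequences to dual vectors additive, and this additivity is exactly what converts condition~3 into the Bessel estimate. Apart from this, the only bookkeeping is keeping track of complex conjugates in the inner products, which is routine.
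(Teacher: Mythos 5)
Your argument is correct. Note that the paper itself does not prove Proposition~\ref{lemyoung}; it simply refers the reader to \cite{young1980introduction} and \cite{casazza-christensen-li-lindner1.pdf}, so what you have written is a complete, self-contained substitute rather than a variant of an in-paper proof. Your cycle $1\Rightarrow 3\Rightarrow 2\Rightarrow 1$ is the standard route and the quantitative bookkeeping is right: in $1\Rightarrow 3$ the {\RF} bound plus Cauchy--Schwarz gives the functional norm $A^{-1/2}\paren{\sum_i\abs{\gamma_i}^2}^{1/2}$ on $\lspan\brac{\psi_i|i\in I}$, and Riesz representation on the closed span yields $\phi$ with exactly the claimed norm bound; in $3\Rightarrow 2$ you correctly identify the delicate point, namely that separate norm bounds on the dual vectors only give an $\ell^1$-type estimate, and you resolve it by forcing the dual vectors into $\cspan\brac{\psi_i|i\in I}$, where uniqueness makes the solution of the moment problem additive in the data, so the bound of condition~3 applied to a finitely supported sequence $(\alpha_i)$ (or its conjugate, which has the same $\ell^2$ norm) is precisely the Bessel inequality \Ref{eq-bessel-def1} with bound $A^{-1}$; and $2\Rightarrow 1$ via \Ref{eq-bessel-def2} applied to $v=\sum_i\alpha_i\psi_i$ is immediate since biorthogonality recovers the coefficients. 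The conjugate conventions you leave loose are indeed immaterial because only the moduli of the coefficients enter the estimates.
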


  The moment problem is 
  to find 
  a vector $\phi\in H$ that satisfies $\braket{\phi,\psi_i}=\gamma_i$, 
  $i\in I$, 
  for a given family $(\psi_i)_{i\in  I}$ and a numerical sequence
  $(\gamma_i)_{i\in I} $,  
 In that context, the equivalence of (i) and (iii) states that the
  {\RF} property guarantees existence of a  solution of the moment problem for
  each square summable $(\gamma_i)_{i\in I} $.

  %%%%%%%%%%%%%%%%%%%%%%%%%%%%%%%%%%%%%%%%%%%%%%%%%%%%%%%%%%%%%%%%%%%
 \section{Distinguishability of general family of pure states: the first main result} 
 \label{main}

 Now we shall state the condition for each type of distinguishability for a family $\seqi{\psi_i}$ of pure states. 
 %
 %
 %%%%%%%%%%%%%%%%%%%%%%%%%%%%%%%%%%%%%%%%%%%%%%%%%
 \begin{thm}
  \label{thm1}
  Let $\seqi{\psi_i}$ be a family of countably many pure states. 
  \begin{enumerate}
   \item The states in $\seqi{\psi_i}$ are 
                 \label{i-dist=minimal}
                 distinguishable if and only if $\seqi{\psi_i}$ is minimal. 
   \item 
                 \label{i-unif=RF}
                 The states in $\seqi{\psi_i}$ are uniformly distinguishable with uniform success probability $q$
                 if and only if $\seqi{\psi_i}$ is Riesz-Fischer with bound $q$,
   \item
                \label{i-perf=orth}
                The states in $\seqi{\psi_i}$ are perfectly distinguishable 
                if and only if $\seqi{\psi_i}$ is an orthonormal family. 
  \end{enumerate}
 \end{thm}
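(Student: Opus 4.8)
The plan is to prove each of the three equivalences by treating its two implications separately, noting that all three ``only if'' directions rest on a single structural observation. Suppose a POVM $\Pi=(\Pi_j)_{j\in I\sqcup\set{?}}$ distinguishes $\seqi{\psi_i}$. Since each $\Pi_j\geq 0$ it has a positive square root $\Pi_j^{1/2}$, and for $i\ne j$ the relation $0=q_{ji}(\Pi)=\braket{\psi_i,\Pi_j\psi_i}=\|\Pi_j^{1/2}\psi_i\|^2$ forces $\Pi_j^{1/2}\psi_i=0$, hence $\Pi_j\psi_i=0$; on the other hand $q_i(\Pi)>0$ gives $\Pi_i^{1/2}\psi_i\ne 0$. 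From this I would read off minimality, the Riesz--Fischer property, and orthonormality in the three respective cases, and for the converses I would build explicit POVMs out of the dual families furnished by Propositions~\ref{lembi} and~\ref{lemyoung}.

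For part~\ref{i-dist=minimal}, necessity is immediate: the bounded operator $\Pi_i^{1/2}$ annihilates every $\psi_j$ with $j\ne i$, hence annihilates $\cspan\set{\psi_j\in H|j\ne i}$ by continuity, while $\Pi_i^{1/2}\psi_i\ne 0$; therefore $\psi_i\notin\cspan\set{\psi_j\in H|j\ne i}$, i.e. $\seqi{\psi_i}$ is minimal. For sufficiency I would use Proposition~\ref{lembi} to obtain a biorthogonal family $\seqi{\phi_i}$ --- which exists precisely because $\seqi{\psi_i}$ is minimal --- and set $\Pi_i:=c_i\ketbra{\phi_i}$ with strictly positive weights $c_i$ chosen so small that $\sum_{i\in I}c_i\|\phi_i\|^2\leq 1$ (possible since each $\phi_i\ne 0$, e.g.\ $c_i:=2^{-n(i)}\|\phi_i\|^{-2}$ for an injection $n\colon I\to\N$). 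Then $\sum_{i\in I}\Pi_i$ converges in the weak operator topology to an operator $\leq 1$, so $\Pi_?:=1-\sum_{i\in I}\Pi_i\geq 0$ completes a POVM, and $q_{ji}(\Pi)=c_j|\braket{\phi_j,\psi_i}|^2=c_j\delta_{ij}$ vanishes for $i\ne j$ and equals $c_i>0$ for $i=j$, so $\Pi$ distinguishes $\seqi{\psi_i}$.

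For part~\ref{i-unif=RF}, necessity is a short computation. Given a distinguishing POVM with constant $q_i(\Pi)=q$, pick scalars $\seqi{\alpha_i}$ all but finitely many of which vanish and put $\psi:=\sum_{i\in I}\alpha_i\psi_i$. Using $\Pi_j\psi_i=0$ for $i\ne j$ and self-adjointness of $\Pi_j$ one finds $\braket{\psi,\Pi_j\psi}=|\alpha_j|^2\braket{\psi_j,\Pi_j\psi_j}=q|\alpha_j|^2$; summing over $j\in I$ and using $\Pi_?\geq 0$ together with $\sum_{j\in I\sqcup\set{?}}\Pi_j=1$ gives $q\sum_{i\in I}|\alpha_i|^2\leq\|\sum_{i\in I}\alpha_i\psi_i\|^2$, which is the Riesz--Fischer condition \Ref{eq-RFcond} with bound $q$. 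For sufficiency, assume $\seqi{\psi_i}$ is Riesz--Fischer with bound $q$; Proposition~\ref{lemyoung} supplies a biorthogonal family $\seqi{\phi_i}$ that is Bessel with bound $q^{-1}$. Put $\Pi_i:=q\ketbra{\phi_i}$. The Bessel inequality \Ref{eq-bessel-def2} gives $\sum_{i\in I}\braket{\psi,\Pi_i\psi}=q\sum_{i\in I}|\braket{\psi,\phi_i}|^2\leq\|\psi\|^2$ for every $\psi\in H$, so $\Pi_?:=1-\sum_{i\in I}\Pi_i\geq 0$ and $q_i(\Pi)=q\,|\braket{\phi_i,\psi_i}|^2=q$ is constant; hence $\Pi$ uniformly distinguishes $\seqi{\psi_i}$ with uniform success probability $q$.

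Part~\ref{i-perf=orth} is the limiting case. For necessity, $\Pi_i\leq\sum_{j\in I\sqcup\set{?}}\Pi_j=1$ together with $q_i(\Pi)=1=\|\psi_i\|^2$ forces $\braket{\psi_i,(1-\Pi_i)\psi_i}=0$, whence $\Pi_i\psi_i=\psi_i$; combining this with $\Pi_i\psi_j=0$ for $j\ne i$ yields $\braket{\psi_i,\psi_j}=\braket{\Pi_i\psi_i,\psi_j}=\braket{\psi_i,\Pi_i\psi_j}=0$, so $\seqi{\psi_i}$ is orthonormal. For sufficiency one takes $\Pi_i:=\ketbra{\psi_i}$, which are mutually orthogonal projections with $\sum_{i\in I}\ketbra{\psi_i}\leq 1$, and $\Pi_?:=1-\sum_{i\in I}\ketbra{\psi_i}$; then $q_i(\Pi)=1$. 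The algebraic identities above are routine; the step that genuinely needs care --- and the only place where the infinitude of $I$ really matters --- is justifying convergence of the operator series $\sum_{i\in I}\Pi_i$ in the weak operator topology and positivity of the remainder $\Pi_?$ in the two sufficiency arguments. That is exactly where the quantitative inputs enter: the summability of the weights $c_i\|\phi_i\|^2$ in part~\ref{i-dist=minimal}, and the Bessel bound $q^{-1}$ from Proposition~\ref{lemyoung} in part~\ref{i-unif=RF}. Correspondingly, the hypotheses ``minimal'' and ``Riesz--Fischer'' are used only to manufacture, via Propositions~\ref{lembi} and~\ref{lemyoung}, the dual family $\seqi{\phi_i}$ from which the POVM is assembled, so the dependence of Theorem~\ref{thm1} on Section~\ref{RF} is confined to that single step.
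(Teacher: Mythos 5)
Your proposal is correct and follows essentially the same route as the paper: the same key observation that $q_{ji}(\Pi)=0$ forces $\Pi_j\psi_i=0$, the same use of the biorthogonal families from Propositions~\ref{lembi} and~\ref{lemyoung} to assemble the POVMs, and the same Bessel-bound argument for positivity of $\Pi_?$. The only (harmless) variations are that you deduce minimality directly from $\Pi_i^{1/2}$ instead of exhibiting the dual family $\Pi_i\psi_i/\braket{\psi_i,\Pi_i\psi_i}$, and in part~\ref{i-unif=RF} you take $\Pi_i=q\ketbra{\phi_i}$, which gives success probability exactly $q$ rather than the paper's $1/\|X\|\geq q$.
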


 Only the \textit{if} part of (i) needs the assumption of countability.

 It should be noted that the statement \ref{i-unif=RF} of Theorem~\ref{thm1} reveals the significance of the Riesz-Fischer bound in 
 quantum measurement theory: success probability of uniform distinction.

 We remark that in the particular case of finite family $\seqi{\psi_i}$, 
 the statements \ref{i-dist=minimal} and \ref{i-unif=RF} of Theorem~\ref{thm1} are identical,  
 which reproduces the results obtained by Chefles~\cite{Chefles1998339}. 
 As was discussed in the previous sections, 
 linear independence, minimality and the {\RF} property are equivalent,  for finitely many vectors.

 We begin the proof of Theorem~\ref{thm1} with 
 %
 %
 %%%%%%%%%%%%%%%%%%%%%%%%%%%%%%%%%%%%%%%%%%%%%%%%%
 \begin{lem}\label{lemele}
  If the POVM $\Pi=\seq{\Pi_j}{j\in I\sqcup\brac{?}}$ distinguishes the pure states in $(\psi_i)_{i\in I}$,
  then the following holds for $i,j,k\in I$.
   \begin{enumerate}[label=\upshape{\arabic*.}]
   \item 
                 $\Pi_j\psi_i=0$ for $i\ne j$. 
   \item 
                 $q_i(\Pi)=1$ if and only if $\Pi_?\psi_i=0$. 
   \item 
                 $\braket{\psi_i,\Pi_k\psi_j}= q_i(\Pi) \delta_{ik} \delta_{jk}$. 
  \end{enumerate}
 \end{lem}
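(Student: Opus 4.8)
The plan is to derive all three statements from two facts already in hand: each POVM element $\Pi_j$ is a positive, hence self-adjoint, bounded operator, and the normalization $\Pi_? + \sum_{i\in I}\Pi_i = 1$ holds in the weak operator topology. For statement~1 I would introduce the positive square root $\Pi_j^{1/2}$. For $i,j\in I$ with $i\ne j$, the distinguishing hypothesis gives
\[
0 \;=\; q_{ji}(\Pi) \;=\; \braket{\psi_i,\Pi_j\psi_i} \;=\; \|\Pi_j^{1/2}\psi_i\|^2 ,
\]
so $\Pi_j^{1/2}\psi_i = 0$, and hence $\Pi_j\psi_i = \Pi_j^{1/2}\paren{\Pi_j^{1/2}\psi_i} = 0$.

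For statement~2 I would fix $i\in I$ and evaluate the normalization identity in the state $\psi_i$. The scalar series $\sum_{j\in I\sqcup\brac{?}}\braket{\psi_i,\Pi_j\psi_i}$ has nonnegative terms and sums to $\|\psi_i\|^2 = 1$; by statement~1 all terms with $j\in I\setminus\brac{i}$ vanish, leaving
\[
q_i(\Pi) + \|\Pi_?^{1/2}\psi_i\|^2 = 1 .
\]
Therefore $q_i(\Pi)=1$ holds exactly when $\Pi_?^{1/2}\psi_i = 0$, which, applying $\Pi_?^{1/2}$ once more, is equivalent to $\Pi_?\psi_i = 0$.

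For statement~3 I would argue by cases, using self-adjointness of $\Pi_k$ together with statement~1. If $i\ne k$, then $\Pi_k\psi_i = 0$, so $\braket{\psi_i,\Pi_k\psi_j} = \braket{\Pi_k\psi_i,\psi_j} = 0$; if $j\ne k$, then $\Pi_k\psi_j = 0$ and the inner product vanishes as well; and in the remaining case $i=j=k$ the inner product is $q_k(\Pi)$ by definition. Since $q_i(\Pi)\,\delta_{ik}\delta_{jk}$ is nonzero only when $i=j=k$, the three cases together give $\braket{\psi_i,\Pi_k\psi_j} = q_i(\Pi)\,\delta_{ik}\delta_{jk}$.

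I do not anticipate a genuine obstacle; the only step requiring a little care is the term-by-term use of the normalization sum in statement~2, which is justified because $\sum_j\braket{\psi_i,\Pi_j\psi_i}$ is a convergent series of nonnegative reals, so individual summands may be isolated and the remaining tail identified with $\braket{\psi_i,\Pi_?\psi_i}$.
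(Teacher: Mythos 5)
Your proposal is correct and follows essentially the same route as the paper: the square-root argument in your statement~1 is exactly the fact the paper invokes (that $\tr[\rho E]=0$ with $E\geq 0$ and $\rho$ pure forces $E\rho=0$), and statements~2 and~3 are then obtained, as in the paper, from the normalization $\sum_j \Pi_j = 1$ and self-adjointness of the $\Pi_k$. Your explicit handling of the weak-operator sum as a convergent series of nonnegative terms is a welcome bit of care but does not change the argument.
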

 \begin{proof}
  The condition $\tr[\rho E] =0$ for $\rho=\ketbra{\psi}$ and $E\geq 0$ implies $E\rho=0$. 
  This fact together with the assumption of the lemma imply the first and second claims. 
  The third follows directly from the first. 
 \end{proof}

 \begin{proof}[Proof of Theorem~\ref{thm1}]
  We must prove the six claims below.
  
  1 (Distinguishability implies minimality). 
  Suppose $\Pi$ distinguishes $(\psi_i)_{i\in I}$. 
  Then it follows from Lemma~\ref{lemele} that 
  $(\phi_i)_{i\in I}$ defined by 
  \begin{eqnarray*}
   \phi_i&:=&\frac{\Pi_i\psi_i}{\braket{\psi_i,\Pi_i\psi_i}} 
  \end{eqnarray*}
  is biorthogonal to $(\psi_i)_{i\in I}$. 
  Thus $(\psi_i)_{i\in I}$ is minimal by Proposition~\ref{lembi}. 

  2 (Minimality implies distinguishability). 
  Suppose $(\psi_i)_{i\in I}$ is minimal. Then by Proposition
  \ref{lembi} 
  there exits a family $(\phi_i)_{i\in I}$ biorthogonal to $(\psi_i)_{i\in I}$.
  Define 
  \begin{eqnarray*}
   \Pi_i:=p_i \frac{\ketbra{\phi_i}}{ \| \,\ketbra{\phi_i} \,\|}, 
    \q 
    \Pi_?:=1-\sum_{i\in I} \Pi_i, 
  \end{eqnarray*}
  where $\sum_{i\in I} p_i=1$ and $p_i>0$ for all $i\in I$. 
  This is possible since $I$ is countable. 
  The operators $\Pi_j$ constitutes a POVM $\Pi=(\Pi_j)_{j\in I\sqcup\brac?}$ because positivity of $\Pi_?$ 
  is guaranteed by an inequality for the operator norm, 
  $ \left\| \sum_{i\in I} \Pi_i \right\| \leq \sum p_i = 1$.
  It follows from the biorthogonality that $\Pi$ distinguishes $(\psi_i)_{i\in I}$. 

  3 (Uniform distinguishability implies the Riesz-Fischer property).
  Suppose $\Pi$ uniformly distinguishes $\seqi{\psi_i}$. 
  Let $\psi=\sum_{i\in I} \alpha_i \psi_i$, 
  with all $\a_i$ but finitely many being zero. 
  One has 
  \begin{eqnarray*}
   \left\| \sum_i \a_i \psi_i\right\|^2
    =\braket{\psi,\psi}
    &=&\sum_{k\in I} \Braket{\psi,\Pi_k \psi} + \Braket{\psi,\Pi_?\psi}\\
   &\geq & \sum_{i,j,k\in I} \overline{\a_i} \braket{\psi_i,\Pi_k\psi_j}\a_j +0
    = q(\Pi) \sum_k |\a_k|^2, 
  \end{eqnarray*}
  where the inequality follows from positivity of $\Pi_?$  and the last equality from Lemma~\ref{lemele}. 
  Thus $\seqi{\psi_i}$ is {\RF} with bound $q(\Pi)$. 

  4 (The Riesz-Fischer property implies uniform distinguishability).
  Suppose $(\psi_i)_{i\in I}$ is Riesz-Fischer with bound $A$. By Proposition~\ref{lemyoung}, 
  there exists a biorthogonal family $(\phi_i)_{i\in I}$ which is Bessel with bound $A^{-1}$.
  It follows from \Ref{eq-bessel-def2} that 
  for any $\xi$ 
  the sum $\sum_{i\in I} |\braket{\phi_i,\xi }|^2$ converges. 
  This in particular implies that 
  only countably many $\braket{\phi_i,\xi }$ are nonzero. 
  With an appropriate order, 
  the sequence $\sum_{i\le k} |\braket{\phi_i,\xi }|^2$ becomes
  Cauchy. 
  From this fact and \Ref{eq-bessel-def1}, 
  one can show that
  \begin{eqnarray*}
   X:=\sum_{i\in I}\ketbra{\phi_i}
  \end{eqnarray*}
  converges in the strong operator topology 
  and defines a bounded liner map with bound $\|X\|\leq A^{-1}$.
  We define the POVM $\Pi=(\Pi_j)_{j\in I\sqcup\brac?}$ by 
  \begin{eqnarray*}
   \Pi_i:=\frac{\ketbra{\phi_i}}{\|X\|}, 
  \end{eqnarray*}
  where positivity of $\Pi_?$ is verified by 
  \begin{eqnarray*}
   \Pi_?:=1-\sum_{i\in I} \Pi_i =1 -\frac{X}{\|X\|}\geq 0.
  \end{eqnarray*}
  This POVM uniformly distinguishes $(\psi_i)_{i\in I}$ 
  with uniform success probability $q(\Pi)=1/\|X\|\geq 1/A^{-1} =A $.

  5 (Perfect distinguishability implies orthonormality).
  Let $\Pi$ distinguish $\seqi{\psi_i}$. 
  Then Lemma~\ref{lemele} implies that 
  \begin{eqnarray*}
   \braket{\psi_i,\psi_j}
    =\sum_{k\in I}
    \Braket{\psi_i,\Pi_k\psi_j} + \braket{\psi_i,\Pi_?\psi_j}
    =\delta_{ij} q_j(\Pi) +0
    =\delta_{ij}.
  \end{eqnarray*}
  Thus $\seqi{\psi_i}$ is an orthonormal family. 

  6 (Orthonormality implies perfect distinguishability). 
  If $\seqi{\psi_i}$ is an orthonormal family, the POVM $\Pi$ defined by 
  $\Pi_i:=\ketbra{\psi_i}$ and $\Pi_?:=1-\sum_{i\in I} \Pi_i$ perfectly distinguishes $(\psi_i)_{i\in I}$.
 \end{proof}

 If a family $\seqi{\psi_i}$ is given, one 
 interested not only 
 whether or not it is uniformly distinguishable 
 but also the value of the largest possible success probability. 
 The following theorem gives the value and the construction of 
 the measurement. 
 %
 %
 %%%%%%%%%%%%%%%%%%%%%%%%%%%%%%%%%%%%%%%%%%%%%%%%%
 \begin{thm}\label{prop1}
  Suppose that a family $\seqi{\psi_i}$ of pure states is uniformly distinguishable. 
  By Proposition~\ref{lembi} and Theorem~\ref{thm1}, it  admits a unique biorthogonal family $\seqi{\phi_i}$ contained in $\cspan\set{\psi_i | i\in I} $.
  Then among all POVMs which uniformly distinguish $\seqi{\psi_i}$, 
  the POVM $\Pi=(\Pi_{j})_{j\in I\sqcup\brac?}$ defined by
  \begin{eqnarray}
   \Pi_j&:=&
        \cases{
        \frac{ \ketbra{\phi_j}}
        { \ \left\| \, \sum_{k\in I} \ketbra{\phi_k} \, \right\| \ },  
        &$j\in I$,
        \\ \\
   1-\sum_{k\in I} \Pi_k,  
        &$j=$?, 
        }
  \end{eqnarray}
  attains the maximum uniform success probability
  \begin{eqnarray}
   q(\Pi)= \frac{1}{ \ \left\| \, \sum_{k\in I} \ketbra{\phi_k} \, \right\| \  }.
  \end{eqnarray}
 \end{thm}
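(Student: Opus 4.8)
The plan is to show that any POVM $\Pi'$ uniformly distinguishing $\seqi{\psi_i}$ has uniform success probability no larger than $q(\Pi)=1/\bigl\|\sum_{k\in I}\ketbra{\phi_k}\bigr\|$, and that $\Pi$ itself is a legitimate POVM achieving this value. The latter half is essentially step 4 of the proof of Theorem~\ref{thm1}: since $\seqi{\psi_i}$ is uniformly distinguishable, it is {\RF}, so $\seqi{\phi_i}$ is Bessel, hence $X:=\sum_{k\in I}\ketbra{\phi_k}$ converges strongly to a bounded positive operator, $1-X/\|X\|\geq 0$, and biorthogonality gives $q_i(\Pi)=\braket{\psi_i,\Pi_i\psi_i}=\braket{\psi_i,\phi_i}\overline{\braket{\psi_i,\phi_i}}/\|X\| = 1/\|X\|$ for every $i$. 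So $\Pi$ uniformly distinguishes with the claimed constant; it remains only to prove optimality.

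For the optimality bound, let $\Pi'$ be any POVM uniformly distinguishing $\seqi{\psi_i}$ with success probability $q'$. By Lemma~\ref{lemele}, $\Pi'_i\psi_j=0$ for $i\neq j$ and $\braket{\psi_i,\Pi'_i\psi_i}=q'$. The idea is that each $\Pi'_i$ acts like (a scalar multiple of) a rank-one projection ``dual'' to $\psi_i$: from $\Pi'_i\geq 0$ and $\Pi'_i\psi_j=0$ for $j\neq i$ one gets, for any finitely supported $(\alpha_k)$, that $\braket{\psi,\Pi'_i\psi}=|\alpha_i|^2\braket{\psi_i,\Pi'_i\psi_i}=q'|\alpha_i|^2$ where $\psi=\sum_k\alpha_k\psi_k$. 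Define $\phi'_i$ by the normalized action of $\Pi'_i$ as in step~1 of the proof of Theorem~\ref{thm1}, $\phi'_i:=\Pi'_i\psi_i/\braket{\psi_i,\Pi'_i\psi_i}$; this is biorthogonal to $\seqi{\psi_i}$. The plan is then to compare $\Pi'$ with $\Pi$ by testing the normalization $\sum_j\Pi'_j=1$ against suitable vectors. Concretely, for any $\psi\in H$, $\sum_{i\in I}\braket{\psi,\Pi'_i\psi}\leq\|\psi\|^2$, and one wants to turn the left-hand side into $q'\sum_i|\braket{\phi'_i,\psi}|^2$ (using $\Pi'_i = q'\,\ketbra{\phi'_i}\cdot(\text{something})$ — more precisely one must argue $\Pi'_i$ is rank one: since $\Pi'_i\geq0$ and $\braket{\psi,\Pi'_i\psi}$ depends on $\psi$ only through $\braket{\phi'_i,\psi}$ on the dense set $\lspan\set{\psi_k}$, hence on all of $H$, we get $\Pi'_i=q'\|\phi'_i\|^{-2}\cdot\|\phi'_i\|^2\,\ketbra{\phi'_i/\|\phi'_i\|}\cdot q'$... — this needs care). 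This forces $\seqi{\phi'_i}$ to be Bessel with bound $1/q'$, i.e.\ $\|\sum_k\ketbra{\phi'_k}\|\leq 1/q'$. Finally, by the uniqueness clause relevant here, or by a direct projection argument, replacing $\phi'_i$ by its component $\phi_i$ in $\cspan\set{\psi_i|i\in I}$ only decreases $\|\sum_k\ketbra{\phi_k}\|$, giving $\|X\|=\|\sum_k\ketbra{\phi_k}\|\leq\|\sum_k\ketbra{\phi'_k}\|\leq 1/q'$, whence $q'\leq 1/\|X\|=q(\Pi)$.

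I expect the main obstacle to be the rank-one reduction: showing rigorously that the normalization constraint $\sum_i\Pi'_i\leq 1$ implies $\sum_i\ketbra{\phi'_i}$ is bounded by $1/q'$, which really rests on the claim that each $\Pi'_i$ must equal $q'\,\ketbra{\phi'_i}$ exactly — this uses that $\ker\Pi'_i\supseteq(\lspan\set{\psi_j:j\neq i})$ and positivity to conclude $\Pi'_i$ is supported on the one-dimensional orthocomplement direction picked out by $\phi'_i$, together with some convergence/limit care since $I$ may be infinite and $\cspan\set{\psi_j:j\neq i}$ need not be all of $H$. The comparison of $\|\sum\ketbra{\phi_k}\|$ with $\|\sum\ketbra{\phi'_k}\|$ via orthogonal projection onto $\cspan\set{\psi_i|i\in I}$ is the other point requiring a short argument; everything else is bookkeeping with the Bessel/{\RF} duality of Proposition~\ref{lemyoung}.
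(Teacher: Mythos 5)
Your optimality argument hinges on a claim that is false as stated, namely that each element $\Pi'_i$ of a uniformly distinguishing POVM must equal $q'\ketbra{\phi'_i}$ exactly. Positivity together with $\Pi'_i\psi_j=0$ for $j\ne i$ only confines the range of $\Pi'_i$ to the orthogonal complement of $\cspan\set{\psi_j | j\in I,\ j\ne i}$, and this complement is not one-dimensional unless the family is complete: it contains all of $\bigl(\cspan\set{\psi_i | i\in I}\bigr)^\perp$. Concretely, when the family is not complete, take the optimal POVM $\Pi$ of the theorem, add $1-P$ (with $P$ the projection onto $\cspan\set{\psi_i | i\in I}$) to one $\Pi_i$ and subtract it from $\Pi_?$; this is again a uniformly distinguishing POVM with the same probabilities $q_{ji}$, but its $i$-th element is not a multiple of a rank-one projection. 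Your supporting remark that $\braket{\psi,\Pi'_i\psi}$ is determined by $\braket{\phi'_i,\psi}$ ``on the dense set $\lspan\set{\psi_k}$, hence on all of $H$'' fails for the same reason: that set is dense only in $\cspan\set{\psi_i | i\in I}$, not in $H$. Since you yourself single out this rank-one reduction as the main obstacle and leave it unresolved, the proposal has a genuine gap precisely where the paper does its real work.

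The gap is repairable along either of two lines, both close to what you wrote. The paper proves only the compressed identity $P\Pi'_iP^*=q(\Pi')\ketbra{\phi_i}$ (its Lemma on biorthogonal duals), obtained by testing $\Pi'_i-q(\Pi')\ketbra{\phi_i}$ on $\lspan\set{\psi_k}$, extending by continuity to the closed span, and polarizing; then positivity of $\Pi'_?$ gives $P=P1P^*\geq\sum_i P\Pi'_iP^*=q(\Pi')\sum_i\ketbra{\phi_i}$, and taking operator norms yields $q(\Pi')\leq 1/\|X\|$ at once, which also makes your separate projection-comparison step unnecessary since the projected duals $P\phi'_i=\phi_i$ appear automatically. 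Alternatively, keep your unprojected $\phi'_i$ but replace the false equality by the inequality $\Pi'_i\geq q'\ketbra{\phi'_i}$, which is all your chain needs; it follows from positivity via Cauchy--Schwarz for the form $(x,y)\mapsto\braket{x,\Pi'_i y}$, i.e.\ $A\geq\braket{v,Av}^{-1}A\ketbra{v}A$ with $A=\Pi'_i$, $v=\psi_i$. Then $1\geq\sum_i\Pi'_i\geq q'\sum_i\ketbra{\phi'_i}$, and your projection step ($P\phi'_i=\phi_i$ by uniqueness, hence $\sum_i\ketbra{\phi_i}=P\bigl(\sum_i\ketbra{\phi'_i}\bigr)P\leq(1/q')P$) closes the argument. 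The achievability half of your proposal coincides with the paper's and is fine, granted the small remark that the biorthogonal family inside $\cspan\set{\psi_i | i\in I}$ is Bessel with the same bound, because orthogonal projection does not increase Bessel bounds.
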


 Applying Theorem~\ref{prop1} to the case of two pure states
 $(\psi_i)_{i=1,2}$, 
 we can immediately obtain the maximum uniform 
 success probability 
 \begin{equation}
  q(\Pi)=1-\left|\braket{\psi_1|\psi_2}\right|, 
   \label{173747_27Jul15}
 \end{equation}
 by calculating the smallest nonzero eigenvalue of the operator
 $\sum_{i=1,2}\ketbra{\psi_i}$. 
 This is essentially the result of Dieks~\cite{Dieks1988303}, 
 though he gave a prior 
 uniform probability density $(p_i)_{i=1,2}=(1/2)_{i=1,2}$ to
 $(\psi_i)_{i=1,2}$ while we do not. 
 For finitely many states, uniform prior probability density is canonical in the sense of having a maximum entropy.
 However, in the case of infinity many states, uniform probability density does not exist. 
 This is the reason why we do not employ the prior probability in our
 discussion. 
 Our definition of uniform distinguishability is meaningful 
 even when the cardinality (number of elements) of $I$ is infinite.

 Prior to the proof for the theorem, we show a lemma 
 which states that any unambiguous measurement of a given family of pure states is essentially the projection to its biorthogonal family. 
 The lemma corresponds to equation (2.9) in \cite{Chefles1998339}, extended to the case of countable families $\seqi{\psi_i}$. 
 %
 %
 %%%%%%%%%%%%%%%%%%%%%%%%%%%%%%%%%%%%%%%%%%%%%%%%%
 \begin{lem}\label{lemdual}
  Assume that a POVM $\Pi$ distinguishes a family $\seqi{\psi_i} $ of
  pure states
  so that, 
  by 
  Theorem~\ref{thm1} 
  and 
  Proposition~\ref{lembi}, 
  $\seqi{\psi_i} $ admits a unique biorthogonal family $\seqi{\phi_i}
  $ contained in $\cspan\set{\psi_i | i\in I} $. 
  Then $\Pi$ satisfies 
  \begin{eqnarray*}
   P\Pi_i P^*=q_i(\Pi)\ketbra{\phi_i}, 
  \end{eqnarray*}
  where $P$ is the orthogonal projection of $H$ onto $\cspan\set{\psi_i | i\in I}$. 
 \end{lem}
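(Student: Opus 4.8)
The plan is to determine the bounded operator $P\Pi_iP^{*}$ by testing it against the total set $\set{\psi_j|j\in I}$ of the subspace $V:=\cspan\set{\psi_i|i\in I}$ and then comparing with the rank-one operator $q_i(\Pi)\ketbra{\phi_i}$. Since $P$ is the orthogonal projection onto $V$ we have $P^{*}=P$, $P\psi_j=\psi_j$ for every $j$, and, crucially, $P\phi_i=\phi_i$ because the biorthogonal family singled out in the hypothesis lies inside $V$.

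First I would treat the off-diagonal case $j\ne i$: by part~1 of Lemma~\ref{lemele}, $\Pi_i\psi_j=0$, so $P\Pi_iP^{*}\psi_j=P\Pi_i\psi_j=0$. Next, for $j=i$, the vector $P\Pi_i\psi_i$ belongs to $V$, and for each $m\in I$ one has, using $P^{*}=P$ and part~3 of Lemma~\ref{lemele}, $\braket{\psi_m,P\Pi_i\psi_i}=\braket{\psi_m,\Pi_i\psi_i}=q_i(\Pi)\,\delta_{mi}$; on the other hand biorthogonality gives $\braket{\psi_m,q_i(\Pi)\phi_i}=q_i(\Pi)\,\delta_{mi}$ as well. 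Hence $P\Pi_i\psi_i-q_i(\Pi)\phi_i$ is a vector of $V$ orthogonal to every $\psi_m$, therefore orthogonal to all of $V$, so it is zero; that is, $P\Pi_iP^{*}\psi_i=q_i(\Pi)\phi_i$.

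It then remains to promote these values on $\set{\psi_j|j\in I}$ to an operator identity. The operator $q_i(\Pi)\ketbra{\phi_i}$ sends $\psi_j$ to $q_i(\Pi)\braket{\phi_i,\psi_j}\phi_i=q_i(\Pi)\,\delta_{ij}\phi_i$, which agrees with the action of $P\Pi_iP^{*}$ found above. Since $\lspan\set{\psi_j|j\in I}$ is dense in $V$ and both operators are bounded, they coincide on $V$; and both annihilate $V^{\perp}$, since $\xi\in V^{\perp}$ forces $P\xi=0$ and also $\braket{\phi_i,\xi}=0$ (as $\phi_i\in V$). Decomposing an arbitrary $\xi\in H$ as $P\xi+(1-P)\xi$ then yields $P\Pi_iP^{*}=q_i(\Pi)\ketbra{\phi_i}$.

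There is no deep obstacle here; the only points that need care are the passage from a dense subspace to all of $V$ (which uses only boundedness of $P\Pi_iP^{*}$) and the bookkeeping about self-adjointness of $P$ together with the fact that the relevant biorthogonal family is the unique one contained in $\cspan\set{\psi_i|i\in I}$ — which is precisely why the statement invokes Theorem~\ref{thm1} and Proposition~\ref{lembi}.
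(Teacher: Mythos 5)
Your proof is correct. It uses the same ingredients as the paper's proof---Lemma~\ref{lemele}, biorthogonality of $\seqi{\phi_i}$, and boundedness/continuity---but finishes along a somewhat different route. The paper first treats the case $\cspan\set{\psi_i | i\in I}=H$ (so $P=1$): it shows that the quadratic form $\braket{\psi,(\Pi_i-q_i(\Pi)\ketbra{\phi_i})\psi}$ vanishes for $\psi$ in the dense subspace $\lspan\set{\psi_i|i\in I}$, extends this to all of $H$ by continuity, and concludes $\Pi_i=q_i(\Pi)\ketbra{\phi_i}$ (implicitly via polarization, valid in a complex Hilbert space); the general case is then handled by compressing the whole POVM to an auxiliary POVM $\Pi'$ on $PH$ and reducing to the first case. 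You instead compute the operator $P\Pi_iP^*$ vector-wise on the total set $\set{\psi_j | j\in I}$: the off-diagonal action vanishes by part~1 of Lemma~\ref{lemele}, the diagonal vector $P\Pi_i\psi_i$ is identified with $q_i(\Pi)\phi_i$ by testing against all $\psi_m$ inside $V=\cspan\set{\psi_i|i\in I}$ (using that a vector of $V$ orthogonal to every $\psi_m$ is zero), and $V^\perp$ is disposed of directly since both $P\Pi_iP^*$ and $q_i(\Pi)\ketbra{\phi_i}$ annihilate it. This avoids both the polarization step and the auxiliary compressed POVM, at the cost of a little extra bookkeeping with $P^*=P$ and $P\phi_i=\phi_i$; both arguments are equally rigorous, and yours is arguably the more elementary of the two.
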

 \begin{proof}
  Assume for a while $\cspan\set{\psi_i |i\in I}=H$ and therefore $P=1$.
  For any $\psi=\sum \alpha_i \psi_i \in \lspan\brac{\psi_i|i\in I}$,
  it follows from Lemma~\ref{lemele} that 
  \begin{eqnarray*}
   \left\langle \psi, (\Pi_i-q_i(\Pi) \ketbra{\phi_i} ) \psi \right\rangle=0
  \end{eqnarray*}
  holds for all $i\in I$.
  Thus, by continuity of the inner product, $\Pi_i$, and $\ketbra{\phi_i}$ , 
  the equation above holds for all $\psi \in \cspan\set{\psi_i | i\in I}=H$. 
  One therefore has $\Pi_i=q_i(\Pi) \ketbra{\phi_i}$. 
  This proves the lemma in the case $\cspan\set{\psi_i | i\in I}=H$.
  When $\cspan\set{\psi_i | i\in I}\ne H$, define $\Pi'$ by 
  \begin{eqnarray*}
   \Pi_j':=
        \cases{
    P\Pi_j P^*, & $j\in I$,\\
   P\Pi_? P + (1-P), &$j=$?.  
        }
  \end{eqnarray*}
  Then $\Pi'$ is a POVM on the Hilbert space $PH$.
  The claim reduces to the case $\cspan\set{\psi_i | i\in I}=H$.
 \end{proof}

 \begin{proof}[Proof of Theorem~\ref{prop1}]
  As was shown in the proof of Theorem~\ref{thm1}, 
  $\|\sum_{i\in I} \ketbra{\phi_i}\|$ is finite 
  and the POVM $\Pi$ in the 
  theorem
  %%proposition 
  is well-defined. 
  %%For the proof of maximality, 
  Let $\Pi'$ be an arbitrary POVM which distinguishes
  $(\psi_i)_{i\in I}$ uniformly 
  and 
  let $P$ be the orthogonal projection from $H$ onto 
  $\cspan\set{\psi_i | i\in I}$. 
  %%Since, 
  %%$\Pi'_?\geq 0  \Leftrightarrow 1\geq \sum_{i\in I} \Pi_i' $ and the
  It follows from 
  from Lemma~\ref{lemdual} and 
  $\Pi'_?\geq 0$ that 
  \begin{eqnarray*}
   P=P1P^*
    \geq \sum_{i\in I} P \Pi_i' P^*
    =q(\Pi')\sum_{i\in I} \ketbra{\phi_i}. 
  \end{eqnarray*}
  Thus one has $1 \geq \| P \| \geq q(\Pi') 
  \left\|\sum_{i\in I} \ketbra{\phi_i} \right\|
  =q(\Pi')/q(\Pi)$, i.e., $q(\Pi) \geq q(\Pi')$. 
 \end{proof}

 %%%%%%%%%%%%%%%%%%%%%%%%%%%%%%%%%%%%%%%%%%%%%%%%%%%%%%%%%%%%%%%%%%%
 \section{Distinguishability of von Neumann lattices: the second main result} 

 In this section, we shall discuss distinguishability of coherent states 
 represented by a lattice in the complex plane, which is called the von~Neumann lattice. 
 The coherent states may be defined for a particle in one dimension or
 any quantum system which allows a harmonic oscillator, 
 so that they can represent photons, phonons, or other bosonic particles. 
 We do not specify the physical system here and treat them 
 in general, though we may sometimes use the terminology for a particle.

 Let $H$ be the Hilbert space which represents the states of a quantum
 system.   
 Let $a$ be the annihilation operator on $H$ that satisfies $aa^*-a^*a=1$. 
 Let $\ket{0}$ be a state which satisfies $a\ket{0}=0$.
 The state $\ket{0}$ is unique up to phase factor and is called the vacuum. 
 The coherent state $\ket z$, where $z\in \mathbb{C}$, is defined by \cite{sch26,PhysRev.131.2766}  
 \begin{equation}
  \ket{z}:=\exp[z a^* - \overline{z} a ]\ket{0}.
 \end{equation}
 They are minimum uncertainty states for the position operator $Q=2^{-1/2}(a+a^*)$ and the momentum operator $P=(2i)^{-1/2}(a-a^*)$. 
 This allows one to regard a coherent state $\ket{z}$ as the quantum state 
 that corresponds to the classical state represented by a single point $z=2^{-1/2}(q+ip)\in\mathbb{C}$ in the phase space. 
 It is easily verified by the equation $a\ket{z}=z\ket{z}$ that
 coherent states $(\ket{z})_{z\in\mathbb{C}}$ is linearly independent.
 It follows that mutually different finite number of states
 $(\ket{z_i})_{i=1,2,\dots,n}$ are uniformly distinguishable.

 In the context of simultaneous measurement of position and momentum, 
 von~Neumann considered the following family of coherent states, which
 corresponds to a lattice on the phase space. 
 %
 %
 %%%%%%%%%%%%%%%%%%%%%%%%%%%%%%%%%%%%%%%%%%%%%%%%%
 \begin{defi}
  Let $\omega_1,\omega_2\in\C$ be such that $\Im(\omega_2/\omega_1)>0$. 
  Let 
  \begin{eqnarray}
   \lat(\omega_1,\omega_2)
        &:=&
        \{ n_1\omega_1+n_2\omega_2 \in \mathbb{C}\mid n_1,n_2 \in \mathbb{Z}
        \} 
        \subset\C, 
        \\
   \vnl(\omega_1,\omega_2)
        &:=&\{ \ket{\Omega} \mid \Omega\in \lat(\omega_1,\omega_2) \}
        \subset H. 
  \end{eqnarray}
  A family $\vnl(\omega_1,\omega_2)$ is called a von~Neumann lattice. 
  The set  $\set{t_1\omega_1 + t_2 \omega_2 | t_1,t_2\in [0,1) }\subset\C$ 
  is called the fundamental region of $\lat(\omega_1,\omega_2)$ or of
  $\vnl(\omega_1,\omega_2)$.  
  The area of the fundamental region is usually denoted by $S$.
 \end{defi}

 Though in some literature the name of von~Neumann lattice is used
 only for the case $S=\pi$,  
 we use the term for all $S$. 
 Von~Neumann lattices are called Weyl-Heisenberg systems in the
 field of time-frequency analysis or in the Gabor 
 analysis~\cite{5298517,Heil:2007aa}.  
 Von~Neumann stated without proof that a von~Neumann lattices is
 complete when $S\lesssim 1/4$  
 \cite[p.217, p.407 in the English version]{von1932mathematische}. 
 It was about 40 years later that Perelomov~\cite{Perelomov:1971aa} and  
 Bargmann et al.~\cite{Bargmann1971221} gave the proof for this fact. 
 Hereafter, we identify a set and a list of vectors 
 in discussion of von~Neumann lattices, 
 so that 
 we may say, for example,
 that a von~Neumann lattice is distinguishable, or 
 that it is linearly independent.

 We collect the properties of von~Neumann lattices~\cite{Perelomov:1971aa,Bargmann1971221,Seip1992} in Table~\ref{table}. 
 These properties of von~Neumann lattices are essentially attributed to, 
 via the Fock-Bargmann space~\cite{CPA:CPA3160140303}, the nature of
 entire functions in complex analysis. 

 \begin{table}[h!]
  \begin{center}
   \begin{tabular}{l| l| c| cc}
    ~       & state families           & complete        & minimal        & Riesz-Fischer  \\ \hline
    $S<\pi$ & $\vnl(\omega_1,\omega_2)$                          &Yes &No &No \\ 
        & $\vnl^{(n)}(\omega_1,\omega_2),\ \ 1\leq n<\infty$  &Yes &No &No \\ \hline
    $S=\pi$ & $\vnl(\omega_1,\omega_2)$                          &Yes &No &No \\
    ~       & $\vnl^{(1)}(\omega_1,\omega_2)$                    &Yes &Yes &No${}^\dagger$ \\
    ~       & $\vnl^{(n)}(\omega_1,\omega_2),\ \ 2\leq n<\infty$  &No &Yes &No${}^\ddagger$ \\ \hline
    $S>\pi$ & $\vnl(\omega_1,\omega_2)$                          &No &Yes &Yes \\
   \end{tabular}
  \end{center}
  \caption{
  Properties of von~Neumann lattices :
  $\vnl^{(n)}$ denotes the set obtained by removing arbitrary $n$ elements from $\vnl$. 
  Facts marked by $\dagger$ and $\ddagger$ can be deduced from (38) in \cite{Perelomov:1971aa}.
  However, they have not been stated explicitly. 
  } 
  \label{table}
 \end{table}

 Combining the properties of von~Neumann lattices and %%our 
 Theorem~\ref{thm1},  
 we arrive at our second main result. 
 %
 %
 %%%%%%%%%%%%%%%%%%%%%%%%%%%%%%%%%%%%%%%%%%%%%%%%%
 \begin{thm}
  \label{thm-vnl}
  Let $S$ be the area of the fundamental region of $\vnl(\omega_1,\omega_2)$, 
  Then, the followings hold.
  \begin{enumerate}[labelindent=2pc,style=multiline,leftmargin=5pc]
   \item
                When $S< \pi$, $\vnl(\omega_1,\omega_2)$ is not distinguishable.
   \item
                When $S= \pi$, $\vnl(\omega_1,\omega_2)$ is not distinguishable.       
                However, the set $\vnl(\omega_1,\omega_2)$ with more
                than one element removed  
                is distinguishable, but is not uniformly distinguishable. 
   \item
                When $S> \pi$, $\vnl(\omega_1,\omega_2)$ is uniformly
                distinguishable, but is not perfectly distinguishable. 
  \end{enumerate} 
 \end{thm}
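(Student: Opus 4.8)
The plan is to obtain Theorem~\ref{thm-vnl} as a corollary of the first main result, Theorem~\ref{thm1}, applied to $\vnl(\omega_1,\omega_2)$ and to its finite truncations $\vnl^{(n)}(\omega_1,\omega_2)$, using as input the classical analytic facts about von~Neumann lattices recorded in Table~\ref{table}. Once the table is granted, the measurement-theoretic conclusions are read off mechanically. First I would record one elementary fact needed in every case: with the displacement-operator normalization adopted here, distinct lattice points $\Omega\ne\Omega'$ satisfy $\abs{\braket{\Omega|\Omega'}}=\exp\paren{-\abs{\Omega-\Omega'}^2/2}>0$, so $\vnl(\omega_1,\omega_2)$---and in fact every subfamily of it containing at least two vectors---fails to be orthonormal. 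By part~\ref{i-perf=orth} of Theorem~\ref{thm1} this already shows that a von~Neumann lattice is never perfectly distinguishable, for any value of $S$; in particular it settles the last clause of~(iii).

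The three regimes are then handled in turn. For $S<\pi$, Table~\ref{table} asserts that the closed linear span of $\vnl^{(1)}(\omega_1,\omega_2)$ is already all of $H$, hence $\vnl(\omega_1,\omega_2)$ is not minimal, and part~\ref{i-dist=minimal} of Theorem~\ref{thm1}---distinguishability is equivalent to minimality---yields~(i). For $S=\pi$, Table~\ref{table} again gives that deleting a single point leaves a complete family, so $\vnl(\omega_1,\omega_2)$ is not minimal and therefore not distinguishable by part~\ref{i-dist=minimal} of Theorem~\ref{thm1}; on the other hand, for $2\le n<\infty$ the family $\vnl^{(n)}(\omega_1,\omega_2)$ is minimal but not \RF{} (the entries marked $\ddagger$), so part~\ref{i-dist=minimal} makes it distinguishable while part~\ref{i-unif=RF}---uniform distinguishability is equivalent to the \RF{} property---rules out uniform distinguishability, which is~(ii). (The same conclusion in fact already holds after deleting a single point, by the $\dagger$ entries, but we state~(ii) conservatively.) For $S>\pi$, Table~\ref{table} records that $\vnl(\omega_1,\omega_2)$ is \RF{}, so part~\ref{i-unif=RF} of Theorem~\ref{thm1} gives uniform distinguishability, and the failure of orthonormality noted above together with part~\ref{i-perf=orth} of Theorem~\ref{thm1} excludes perfect distinguishability; this is~(iii).

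All the substantive work thus lies in justifying Table~\ref{table}, and the genuinely delicate points are the threshold facts at $S=\pi$: that $\vnl^{(n)}$ remains minimal for every $n\ge 1$ yet never becomes \RF{}. The completeness/minimality transition across $S=\pi$ is the classical theorem of Perelomov~\cite{Perelomov:1971aa} and Bargmann et al.~\cite{Bargmann1971221}, proved by passing to the Fock-Bargmann realization, in which $\lat(\omega_1,\omega_2)$ is the zero set of a theta-type entire function and $S=\pi$ is the critical (Nyquist) density; for $S>\pi$ the \RF{} property is the assertion that the lattice is an interpolating sequence for the Fock-Bargmann space, which is the theorem of Seip and Wallst\'en~\cite{Seip1992}. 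The facts $\dagger$ and $\ddagger$ are not stated explicitly in the literature; I would extract them from Perelomov's identity---equation~(38) of~\cite{Perelomov:1971aa}---by exhibiting an explicit square-summable coefficient sequence $\seqi{\alpha_i}$ for which $\left\|\sum_i\alpha_i\ket{\Omega_i}\right\|^2$ is not bounded below by any fixed positive multiple of $\sum_i\abs{\alpha_i}^2$, thereby witnessing the failure of~\Ref{eq-RFcond}, while minimality---equivalently, the existence of a biorthogonal family, Proposition~\ref{lembi}---still holds. Setting up that estimate, which amounts to controlling the relevant theta-function asymptotics near the critical density, is the one step that is not pure bookkeeping; everything else is an invocation of Theorem~\ref{thm1}.
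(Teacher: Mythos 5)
Your proposal matches the paper's own argument: the paper proves Theorem~\ref{thm-vnl} exactly by reading off the completeness/minimality/\RF{} entries of Table~\ref{table} (citing Perelomov, Bargmann et al.\ and Seip, with the $\dagger$/$\ddagger$ facts deduced from Perelomov's equation~(38)) and applying the three parts of Theorem~\ref{thm1}, together with the observation that distinct coherent states are never orthogonal (via $\abs{\braket{z|w}}^2=e^{-\abs{z-w}^2}$) to exclude perfect distinguishability. Your bookkeeping in each regime, including the remark that the $S=\pi$ claim already holds after removing a single point, is correct and essentially identical to the paper's route.
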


 A finite subset of coherent states is always uniformly distinguishable,
 whereas von~Neumann's lattice, which is infinite, behaves quite different.
 The result shows that each level of distinguishability is directly
 related to the area $S=\Im\left(\overline{\omega_1}\,\omega_2\right)$ 
 and does not depend on $\omega_1$ and $\omega_2$ separately. 
 Furthermore, it can be shown that
 distinguishability of von~Neumann's lattice is determined solely by the density of points in the complex plane, $1/S$, and is robust to deformation of the lattice (See proofs in \cite{Perelomov:1971aa} and \cite{Seip1992}).
 The threshold $S=\pi$ 
 corresponds to the area of the Planck constant $h$ in the
 classical phase space 
 [note that $d^2z=(h/2\pi)^{-1}(2^{-1/2} dq)(2^{-1/2}dp)$]. 
 Physically, the area $h$ is the minimum unit of area of the phase space, 
 which appeared e.g. 
 in the Bohr-Sommerfeld quantum condition~\cite[(48.2)]{landau1977quantum}.
 Theorem~\ref{thm-vnl} 
 reveals the measurement theoretical meaning of the Planck constant. 

 Finally, we give a simple and direct estimate for the uniform success
 probability, though it is not tight. 
 %
 %
 %%%%%%%%%%%%%%%%%%%%%%%%%%%%%%%%%%%%%%%%%%%%%%%%%
 \begin{thm}\label{prop2}
  $|\omega_1|,|\omega_2|\to \infty$ with $\sin\arg(\overline{\omega_1}\,\omega_2)>\ep$ for some $\ep>0$
  forces maximal uniform success probability $q(\Pi)$ of $\vnl(\omega_1,\omega_2)$ approaches $1$.
  More specifically, when $\omega_1,\omega_2\in\mathbb{C}$ satisfies $S=\Im(\overline{\omega_1}\,\omega_2)> \pi$, 
  there exists a POVM $\Pi$ which distinguishes $\vnl(\omega_1,\omega_2)$ uniformly with uniform success probability
  \begin{eqnarray*}
   q(\Pi) 
        \geq 2- 
        \left(
         1+
         \frac{2\sqrt{\pi}}{\sin\arg(\overline{\omega_1}\,\omega_2)\min_i|\omega_i|}
        \right)^2.
  \end{eqnarray*}

 \end{thm}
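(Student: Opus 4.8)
The plan is to reduce the problem to producing one explicit Riesz--Fischer bound for $\vnl(\omega_1,\omega_2)$ and then to appeal to statement~\ref{i-unif=RF} of Theorem~\ref{thm1}, which turns a Riesz--Fischer bound $A$ into a POVM that uniformly distinguishes the family with uniform success probability exactly $A$. Since the maximal uniform success probability is the supremum over all uniformly distinguishing POVMs and never exceeds $1$, a single good lower bound on one Riesz--Fischer bound settles both the quantitative estimate and the limiting statement at once.

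The first ingredient is a general diagonal-dominance (Gershgorin/Schur-type) estimate: if $\seqi{\psi_i}$ consists of unit vectors with $C:=\sup_{i}\sum_{j\ne i}\abs{\braket{\psi_i|\psi_j}}<1$, then $\seqi{\psi_i}$ is {\RF} with bound $1-C$. For finitely supported scalars $\seqi{\a_i}$ one expands $\left\|\sum_i\a_i\psi_i\right\|^2=\sum_i\abs{\a_i}^2+\sum_{i\ne j}\overline{\a_i}\a_j\braket{\psi_i|\psi_j}$ and bounds the off-diagonal part in absolute value using $\abs{\a_i}\abs{\a_j}\leq\frac12(\abs{\a_i}^2+\abs{\a_j}^2)$ together with the symmetry $\abs{\braket{\psi_i|\psi_j}}=\abs{\braket{\psi_j|\psi_i}}$, which gives
\[
 \left\|\sum_i\a_i\psi_i\right\|^2 \;\geq\; (1-C)\sum_i\abs{\a_i}^2 .
\]
All sums here are finite, so there is nothing to justify analytically.

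The second ingredient is the evaluation of $C$ for a von~Neumann lattice. Using the coherent-state overlap $\abs{\braket{z|w}}^2=e^{-\abs{z-w}^2}$ and the fact that $\lat(\omega_1,\omega_2)$ is a group, the sum $\sum_{j\ne i}\abs{\braket{\psi_i|\psi_j}}$ is independent of $i$ and equals $C=\sum_{0\ne\Omega\in\lat(\omega_1,\omega_2)}e^{-\abs{\Omega}^2/2}$. For $\Omega=n_1\omega_1+n_2\omega_2$, rotating the plane so that $\omega_1$ is real and positive gives $\abs{\Omega}\geq\abs{\Im\Omega}=\abs{n_2}\,\abs{\omega_2}\sin\theta$, and rotating so that $\omega_2$ is real and positive gives $\abs{\Omega}\geq\abs{n_1}\,\abs{\omega_1}\sin\theta$, where $\theta:=\arg(\overline{\omega_1}\,\omega_2)\in(0,\pi)$. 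With $m:=\min_i\abs{\omega_i}$ this yields $\abs{\Omega}^2\geq m^2\sin^2\theta\,\max(n_1^2,n_2^2)\geq\frac12 m^2\sin^2\theta\,(n_1^2+n_2^2)$, so
\[
 C \;\leq\; \sum_{(n_1,n_2)\ne(0,0)} e^{-\frac14 m^2\sin^2\theta\,(n_1^2+n_2^2)} \;=\; \left(\sum_{n\in\Z} e^{-\frac14 m^2\sin^2\theta\,n^2}\right)^2-1 ,
\]
and comparing the one-dimensional sum with a Gaussian integral,
\[
 \sum_{n\in\Z} e^{-\frac14 m^2\sin^2\theta\,n^2} \;=\; 1+2\sum_{n\geq1} e^{-\frac14 m^2\sin^2\theta\,n^2} \;\leq\; 1+2\int_0^\infty e^{-\frac14 m^2\sin^2\theta\,x^2}\,dx \;=\; 1+\delta ,
\]
where $\delta:=\frac{2\sqrt{\pi}}{\sin\arg(\overline{\omega_1}\,\omega_2)\,\min_i\abs{\omega_i}}$, so that $C\leq(1+\delta)^2-1$.

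Finally I would assemble the pieces. If $C<1$, the diagonal-dominance estimate and statement~\ref{i-unif=RF} of Theorem~\ref{thm1} produce a POVM $\Pi$ uniformly distinguishing $\vnl(\omega_1,\omega_2)$ with $q(\Pi)=1-C\geq 2-(1+\delta)^2$; if $C\geq1$ then $2-(1+\delta)^2\leq0$, and since $S>\pi$ forces uniform distinguishability by Theorem~\ref{thm-vnl}, any uniformly distinguishing POVM already satisfies $q(\Pi)>0\geq 2-(1+\delta)^2$. The limiting statement then follows since the maximal uniform success probability lies in $(0,1]$ while $2-(1+\delta)^2\to1$ as $\delta\to0$, and $\delta\to0$ whenever $\abs{\omega_1},\abs{\omega_2}\to\infty$ with $\sin\arg(\overline{\omega_1}\,\omega_2)>\ep$. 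I expect the only genuine content to be the geometric inequality $\abs{\Omega}\geq\abs{n_k}\,\abs{\omega_k}\sin\theta$ and, above all, the careful bookkeeping of the factors of $2$ (between $\max(n_1^2,n_2^2)$ and $n_1^2+n_2^2$, and in $\int_0^\infty e^{-ax^2}\,dx=\frac12\sqrt{\pi/a}$) so that the constant comes out as exactly $2\sqrt{\pi}\big/\!\paren{\sin\arg(\overline{\omega_1}\,\omega_2)\min_i\abs{\omega_i}}$ rather than off by a factor $\sqrt2$; everything else is the Gershgorin-type estimate and elementary summation, and the bound is deliberately crude, the sharp value being $1/\left\|\sum_k\ketbra{\phi_k}\right\|$ from Theorem~\ref{prop1}, which depends on the full arithmetic structure of the lattice (Table~\ref{table}).
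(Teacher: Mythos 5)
Your proposal is correct and takes essentially the same route as the paper's proof: a diagonal-dominance (Schur/Gershgorin-type) bound giving the Riesz--Fischer constant $1-\sum_{0\ne\Omega\in\lat(\omega_1,\omega_2)}e^{-|\Omega|^2/2}$, the geometric estimate $|\Omega|^2\geq\frac12\min_i|\omega_i|^2\sin^2\theta\,(n_1^2+n_2^2)$, a Gaussian-integral comparison yielding the constant $2\sqrt{\pi}/(\sin\theta\,\min_i|\omega_i|)$, and then Theorem~\ref{thm1}\,\ref{i-unif=RF}. The only (harmless) differences are cosmetic: you bound the off-diagonal sum by AM--GM rather than the paper's Cauchy--Schwarz per difference vector, and you explicitly dispose of the vacuous case $C\geq1$ via Theorem~\ref{thm-vnl}, which the paper leaves implicit.
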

  \begin{proof}
   Let $\Omega_{nm}=n\omega_1+m\omega_2\in\lat(\omega_1,\omega_2)$.
   For any numerical sequence $(\alpha_{n,m})_{(n,m)\in \mathbb{Z}^2}$
   with only finitely many being nonzero, one has 
   \begin{eqnarray*}
        &&\left\|\sum \a_{m,n} \ket{\Omega_{m,n}} \right\|^2 \\
        &=&\sum_{m,n} \sum_{k,\ell}  \overline{ \a_{k,\ell} }\a_{k+m,\ell+n} 
         \braket{\Omega_{k,\ell}|\Omega_{k+m,\ell+n}} \\
        &=&\sum_{k,\ell} |\a_{k,\ell}|^2
         +\sum_{(m,n)\ne(0,0)}  \sum_{k,\ell}
     \overline{\a_{k,\ell}}\a_{k+m,\ell+n}  \braket{\Omega_{k,\ell}|\Omega_{k+m,\ell+n}} .
   \end{eqnarray*}
   The second sum in the last line can be estimated as 
   \begin{eqnarray*}
        &&\left| \sum_{(m,n)\ne(0,0)}  \sum_{k,\ell} 
           \overline{\a_{k,\ell}}\a_{k+m,\ell+n}  \braket{\Omega_{k,\ell}|\Omega_{k+m,\ell+n}}  \right|\\
        &\leq & 
         \sum_{(m,n)\ne(0,0)}    
         e^{ -|\Omega_{m,n}|^2/2 }
         \sum_{k,\ell}  |\overline{\a_{k,\ell}}\a_{k+m,\ell+n}|\\
        &\leq & 
         \sum_{(m,n)\ne(0,0)}  
         e^{ -|\Omega_{m,n}|^2/2 }
         \sum_{k,\ell}  |\a_{k,\ell}|^2, 
   \end{eqnarray*}
   where we have used the formula $|\braket{z|w}|^2=e^{-|z-w|^2}$ 
   and the triangle inequality in the first line 
   and the Cauchy-Schwarz inequality in the second. 
   Therefore, we have
   \begin{eqnarray*}
        \left\|\sum \a_{m,n} \ket{\Omega_{m,n}} \right\|^2
         \geq 
         A
         \sum_{k,\ell}  |\a_{k,\ell}|^2,
         \qquad A := 1 -\sum_{(m,n)\ne(0,0)}  
         e^{ -|\Omega_{m,n}|^2/2 }. 
   \end{eqnarray*}
   In the case $A>0$, $\vnl(\omega_1,\omega_2)$ is Riesz-Fischer with bound $A$.
   So, by the Theorem~\ref{thm1}, $\vnl(\omega_1,\omega_2)$ is uniformly distinguishable 
   with uniform success probability at least $A$.

   Since 
   $
   \left|\overline{\lambda} \lambda+\overline{\lambda}\nu\right|
   \geq \left|\Im\left( \overline{\lambda}\lambda+\overline{\lambda}\nu\right)\right|
   =\left|\Im\left(\overline{\lambda}\nu\right)\right|
   $
   , we have 
   \begin{eqnarray*}
        |\Omega_{m,n}|
         &=& 
         \frac{1}{|\omega_1m|} 
         \left| \overline{\omega_1m}\,\omega_1m+\overline{\omega_1m}\, \omega_2n \right|\\
        &\geq &
         \frac{1}{|\omega_1m|} 
         \left|\Im\left \{ \overline{\omega_1m}\, \omega_2n \right\}\right|
         =
         \frac{1}{|\omega_1m|} 
         \abs{ mn } S 
         \geq
         \frac{S}{\max_i |\omega_i|}\,|n|
   \end{eqnarray*}
   when $m\ne0$, and 
   \begin{eqnarray*}
        |\Omega_{m,n}|^2
         &\geq &
         \frac{S^2}{\max_i |\omega_i|^2}\,n^2
   \end{eqnarray*} 
   for all $m,n\in\bZ$. Hence
   \begin{eqnarray*}
        1-A
         &\leq &  
         \sum_{(m,n)\ne (0,0)}  
         \exp\left[ -\frac{S^2}{4\max_i |\omega_i|^2}\,(m^2+n^2)  \right]
         \\
        &=&
         -1+
         \left(
          \sum_{k \in \bZ}  
          \exp\left[ -\frac{S^2}{4\max_i |\omega_i|^2}\,k^2\right]
         \right)^2\\
        &\leq &
         -1+
         \left(
          1+
          \int_{-\infty}^\infty
          \exp\left[ -\frac{S^2}{4\max_i |\omega_i|^2}\,x^2\right]
          dx
         \right)^2\\
        &=&
         -1+
         \left(
          1+
          \frac{2\sqrt{\pi}\max_i|\omega_i|}{S}
         \right)^2.
   \end{eqnarray*}
   Since $S=|\omega_1|\, |\omega_2|\, \sin \arg\left(\overline{\omega_1}\,\omega_2\right)$, the desired estimate follows.
  \end{proof}

  Theorem~\ref{prop2} proves a part of Theorem~\ref{thm-vnl} directly. 
  Theorem~\ref{prop2} justifies the intuition that, as the lattice becomes large, 
  the von~Neumann lattice approaches an orthogonal family.

  It should be noted however that the uniform success probability cannot be estimated by $S$ only.
  In particular, the condition $S\to \infty$ is not sufficient for the uniform success probability $q(\Pi)$ to approach unity. 
  It can be seen easily as follows.
  The uniform success probability $q(\Pi)$ cannot be greater than the maximal uniform success probability 
  of distinguishing two states $\{ \ket{0}, \ket{\Omega} \}$ in $\vnl(\omega_1,\omega_2)$, 
  where $\Omega$ is either $\omega_1$ or $\omega_2$. 
  One therefore has, by (\ref{173747_27Jul15}), 
  \begin{eqnarray}
   q(\Pi)  \leq 1- |\braket{0|\Omega}| =
        1-\exp\left[-\frac{|\Omega|^2}{2}\right] 
  \end{eqnarray}
  for $\Om=\om_1,\om_2$. 
  This proves that $q(\Pi)$ cannot be estimated solely by $S$.

  %%%%%%%%%%%%%%%%%%%%%%%%%%%%%%%%%%%%%%%%%%%%%%%%%%%%%%%%%%%%%%%%%%%
 \section{Conclusion}

 We  examined distinguishability of countable pure states.
 We defined distinguishability of countable states as the possibility
 of unambiguous measurements on these states. 
 There we classified the distinguishability into three,
 namely, distinguishability, uniform distinguishability, and perfect distinguishability.
 Distinguishability and uniform distinguishability, 
 which are 
 equivalent when the number of states is finite, 
 split when the number becomes infinite.
 We then proved a criterion of distinguishability for countable pure
 states in Theorem~\ref{thm1}. 
 The theorem establishes a relation 
 between operational definitions of distinguishability and intrinsic
 properties of a family of state vectors in the Hilbert space.
 In addition, we gave the maximal uniform success
 probability  
 and a POVM which attain it 
 in Theorem~\ref{prop1}. 

 After developing a general criterion of distinguishability, 
 we discussed distinguishability of von~Neumann's lattice, 
 which is a family of states corresponds to a lattice of the
 phase space in the classical mechanics. 
 Besides its own interest in measurement theory, 
 it serves as an excellent 
 example for the general theory in the sense that 
 all the subtleties arise and can be discussed completely. 
 We showed in  Theorem~\ref{thm-vnl} 
 that distinguishability of a von~Neumann lattice depends only on
 the area of its fundamental region. 
 We see a drastic change in distinguishability 
 when the area becomes the Planck constant $h$.
 The result is robust to deformation of the lattice.
 
 The Planck constant $h$ is without doubt the most fundamental
 constant in quantum physics. 
 It appears in the canonical commutation relation $[Q,P]=ih/(2\pi)$ 
 and characterizes quantum physics in many ways. 
 One is the uncertainty relation of physical quantities. 
 A simple, well-known inequality is Kennard's inequality~\cite[(27)]{kennard1927},
 which gives a bound for 
 the standard deviations $\sigma$ of the observables
 $Q$ and $P$, 
 \[
 \sigma(Q)\sigma(P)\geq \frac{h}{4\pi}. 
 \]
 Modern interpretations of Heisenberg's noise-disturbance uncertainty
 relation and corresponding rigorous inequalities are also found in 
 \cite{PhysRevA.67.042105} (See also \cite[Section~6]{Ozawa2004350}.)
 and \cite{PhysRevA.84.042121}. 
 Another important aspect of $h$ is 
 the Bohr-Sommerfeld quantum condition~\cite[(48.2)]{landau1977quantum}
 \[
 \oint p \, dq=h\left( n+ \frac{1}{2} \right)
 \qquad n=0,1,2,\dots \ .
 \]
 It not only stimulated the discovery of quantum mechanics but also
 can be shown in quantum mechanics with the Wentzel-Kramers-Brillouin (WKB) approximation. 
 This condition explains the familiar fact in statistical mechanics 
 that a single quantum state occupies an area of $h$ in the classical phase.
 Our analysis 
 of von~Neumann lattices revealed 
 the significance of the Planck constant $h$ 
 through the context of state discrimination, 
 thereby giving another measurement-theoretic meaning of $h$, 
 and gives a rigorous version of justification for $h$ 
 to be the unit of the phase space.

 %%%%%%%%%%%%%%%%%%%%%%%%%%%%%%%%%%%%%%%%%%%%%%%%%%%%%%%%%%%%%%%%%%%
 \section{Discussion}

 We have not discussed criterions of distinguishability for mixed
 states. 
 For finitely many mixed states, Feng et al.~\cite{PhysRevA.70.012308}
 obtained a condition of distinguishability.  
 We generalize their result 
 to the case of countably many states
 and present it in a slightly different manner.  
 %
 %
 %%%%%%%%%%%%%%%%%%%%%%%%%%%%%%%%%%%%%%%%%%%%%%%%%
 \begin{prop}
  Let $(\rho_i)_{i\in I}$ be a 
  countable 
  family of 
  density operators on a Hilbert space $H$.
  Then, the following are equivalent.
  \begin{enumerate}
   \item
                $(\rho_i)_{i\in I}$ is distinguishable.
   \item 
                 For all $i\in I$,
                 \begin{eqnarray*}
                  \left(\bigcap_{k\ne i} \ker\rho_k \right) 
                   \setminus \ker \rho_i \ne \varnothing.
                 \end{eqnarray*}
   \item 
                 For all $i\in I$,
                 \begin{eqnarray*}
                  \left(\bigcap_{k \in I} \ker\rho_k \right)
                   \subsetneq  \left(\bigcap_{k\ne i} \ker\rho_k \right).
                 \end{eqnarray*}
   \item
                For all $i\in I$,
                \begin{eqnarray*}
                 \cspan\left( \bigcup_{k \ne i} \overline{\im \rho_k} \right)
                  \subsetneq  \cspan\left( \bigcup_{k\in I} \overline{\im \rho_k} \right).
                \end{eqnarray*}
  \end{enumerate}
  Here, $\overline{ L}$ denotes the norm closure of $L\subset H$,
  $\ker \rho=\set{ \xi \in H| \rho\xi=0}$ and $\im \rho = \set{ \rho\xi | \xi \in H}$
  for a  bounded operator $\rho$ on $H$.
 \end{prop}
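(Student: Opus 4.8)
The plan is to separate the Hilbert-space content from the measurement-theoretic content: first show that (ii), (iii) and (iv) are mutual reformulations, and then close the loop with (i) $\Rightarrow$ (ii) and (iii) $\Rightarrow$ (i). Two elementary facts will be used throughout. First, for bounded positive operators $\rho,E\geq 0$ one has $\tr[E\rho]=0$ iff $\rho E=0$ iff $E\rho=0$ iff $\overline{\im\rho}\subseteq\ker E$: from $\sqrt{E}\,\rho\,\sqrt{E}\geq 0$ and $\tr[\sqrt{E}\,\rho\,\sqrt{E}]=\tr[E\rho]=0$ one gets $\sqrt{E}\,\rho\,\sqrt{E}=0$, hence $(\rho^{1/2}E^{1/2})^{*}(\rho^{1/2}E^{1/2})=0$, so $\rho^{1/2}E^{1/2}=0$ and $\rho E=0$, the remaining implications being immediate. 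This is the general-$\rho$ analogue of the rank-one observation already invoked in the proof of Lemma~\ref{lemele}. Second, a bounded self-adjoint operator satisfies $\ker\rho=(\overline{\im\rho})^{\perp}$.

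For (ii) $\Leftrightarrow$ (iii): since $\bigcap_{k\in I}\ker\rho_k=\ker\rho_i\cap\bigl(\bigcap_{k\neq i}\ker\rho_k\bigr)$, for fixed $i$ the inclusion $\bigcap_{k\in I}\ker\rho_k\subseteq\bigcap_{k\neq i}\ker\rho_k$ is an equality exactly when $\bigcap_{k\neq i}\ker\rho_k\subseteq\ker\rho_i$; negating this gives (ii). For (iii) $\Leftrightarrow$ (iv) one takes orthogonal complements: using $\ker\rho_k=(\overline{\im\rho_k})^{\perp}$ together with $\bigl(\bigcap_{\alpha}W_{\alpha}\bigr)^{\perp}=\cspan\bigcup_{\alpha}W_{\alpha}^{\perp}$ for closed subspaces, one obtains $\bigl(\bigcap_{k\neq i}\ker\rho_k\bigr)^{\perp}=\cspan\bigcup_{k\neq i}\overline{\im\rho_k}$, and likewise with $k\neq i$ replaced by $k\in I$; since $V\mapsto V^{\perp}$ is an inclusion-reversing bijection on closed subspaces, the proper inclusion in (iii) is equivalent to the proper inclusion in (iv).

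For (i) $\Rightarrow$ (ii), let $\Pi=(\Pi_j)_{j\in I\sqcup\brac{?}}$ distinguish $(\rho_i)_{i\in I}$. For each $k\neq i$, $\tr[\Pi_i\rho_k]=q_{ik}(\Pi)=0$ yields $\rho_k\Pi_i=0$, i.e.\ $\im\Pi_i\subseteq\ker\rho_k$, whereas $\tr[\Pi_i\rho_i]=q_i(\Pi)>0$ forces $\rho_i\Pi_i\neq 0$, so $\rho_i\Pi_i\psi\neq 0$ for some $\psi\in H$. Then $\xi:=\Pi_i\psi$ lies in $\bigcap_{k\neq i}\ker\rho_k$ but not in $\ker\rho_i$, which is (ii). For (iii) $\Rightarrow$ (i), use (iii) to pick, for each $i$, a unit vector $\xi_i\in\bigcap_{k\neq i}\ker\rho_k$ with $\xi_i\notin\bigcap_{k\in I}\ker\rho_k$; since $\xi_i$ already annihilates every $\rho_k$ with $k\neq i$, necessarily $\rho_i\xi_i\neq 0$, so $\braket{\xi_i,\rho_i\xi_i}=\|\rho_i^{1/2}\xi_i\|^{2}>0$ while $\braket{\xi_i,\rho_k\xi_i}=0$ for $k\neq i$. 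Choose weights $p_i>0$ with $\sum_{i\in I}p_i=1$ (possible since $I$ is countable) and set $\Pi_i:=p_i\ketbra{\xi_i}$, $\Pi_?:=1-\sum_{i\in I}\Pi_i$. The net of finite partial sums of $\sum_i p_i\ketbra{\xi_i}$ is increasing and bounded above by $1$, hence converges in the strong operator topology to some $X\leq 1$; therefore $\Pi_?\geq 0$, $\Pi=(\Pi_j)_{j\in I\sqcup\brac{?}}$ is a POVM, and the computations above show it distinguishes $(\rho_i)_{i\in I}$.

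The complement bookkeeping in (ii)--(iv) and the POVM construction in (iii) $\Rightarrow$ (i) closely parallel the pure-state arguments in Section~\ref{main} (the latter is essentially the $\rho_i=\ketbra{\psi_i}$ case, with $\xi_i$ playing the role of the normalised biorthogonal vector), so I expect these to be routine. The genuinely new point, and the one place to be careful, is the positivity lemma above: in Lemma~\ref{lemele} the operator $\rho$ is a rank-one projection and the conclusion follows by testing on a single eigenvector, whereas here $\rho_k$ is an arbitrary density operator and one must pass through $\sqrt{E}\,\rho\,\sqrt{E}=0\Rightarrow\rho^{1/2}E^{1/2}=0\Rightarrow\rho E=0$. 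I expect this to be the main, though small, obstacle; once it is in hand the remainder is bookkeeping with kernels and orthogonal complements.
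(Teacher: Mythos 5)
Your proposal is correct and follows essentially the same route as the paper: (i)$\Rightarrow$(ii) via the trace-zero positivity fact $\tr[E\rho]=0\Rightarrow\rho E=0$ (the paper phrases it with $\Pi_j^{1/2}$, you with $\Pi_j$, but the square-root mechanism is identical), (ii)$\Leftrightarrow$(iii) by elementary set bookkeeping, (iii)$\Leftrightarrow$(iv) by orthogonal complements using $\overline{\im\rho}=(\ker\rho)^{\perp}$, and (iii)$\Rightarrow$(i) by the weighted rank-one POVM construction that the paper simply cites from the proof of Theorem~\ref{thm1}. No gaps; your explicit write-out of the construction and of the positivity lemma just makes the argument self-contained.
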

 \begin{proof}
  Assume (i). 
  Then, 
  for distinct elements $i$ and $j$ of $I$, 
  one has 
  $0=\tr[\rho_i\Pi_j]=\tr[\Pi_j^{1/2}\rho_i\Pi_j^{1/2}]$ hence 
  $\rho_i\Pi_j^{1/2}=0$. 
  Therefore, 
  for each $i\in I$, there exists $\psi_i$ such that
  $\rho_k(\Pi_i^{1/2}\psi_i) = 0$ for all $k\ne i$ and
  $\rho_i(\Pi_i^{1/2}\psi_i) \ne 0$. 
  This ensures (ii).
  That (ii) implies (i) is as in the proof of
  Theorem~\ref{thm1}. 
  That (ii) is equivalent to (iii) is seen 
  by a trivial set-theoretical identity
  $X\setminus (X\cap Y)=X\setminus Y$. 
  The equivalence of (iii) and (iv) is due to $\overline{\im
  \rho_i}=(\ker \rho_i)^\perp$ and 
$
   \cspan\left( \bigcup_j K_j^\perp \right)
        =
        \left( \bigcap_j K_j \right)^\perp,
$
  where each $K_j$ is a closed subspace of $H$, 
  and $\perp$ denotes the
  orthogonal complement. 
  \end{proof}

  When $(\rho_i)_{i\in I}$ is finite family, (iv) reduces to
  \begin{eqnarray*}
   \lspan\left( \bigcup_{k \ne i} \im \rho_k \right)
        \subsetneq  \lspan\left( \bigcup_{k\in I} \im \rho_k \right) 
  \end{eqnarray*}
  for all $i\in I$.
  This is the condition that Feng et al.~\cite{PhysRevA.70.012308} 
  presented. 
  
  The criterion enables us to investigate the time evolution of distinguishability
  i.e. 
  the relation between distinguishability of 
  $(\rho_i)_{i\in I}$ and that of $(\rho_i')_{i\in I}$.
  Here we assume each state $\rho_i$, at time $t$, evolves to $\rho_i'$ at $t'>t$. 
  We already note in the remark below the Definition~\ref{def-dist} that 
  distinguishability is invariant under unitary evolutions. 
  Thus we should concern non-unitary evolutions of the system,
  which changes a pure state into a mixed state in general.
  Hence we need a criterion for mixed states.

  Distinction of coherent states is not only a subject of theoretical
  interest 
  but also of a practical problem. 
  A coherent state of light is easy to handle and is often
  used  in optical communication. 
  Let us consider the following simple example. 
  The sender generates several coherent states and sends one of them, 
  which travels through optical fibers. 
  The receiver detects it and determine which coherent state was sent.
  The simplest case is to distinguish two states, the vacuum 
  $\ket{0}$ and another coherent state $\ket{\omega_1}$. 
  A slightly more complicated problem is to distinguish nine states, 
  i.e., 
  the vacuum and the eight states enclosing the vacuum 
  $C_9=\{ \, \ket{\Omega} \mid \Omega=0,\pm\omega_1,\pm\omega_2, \pm\omega_1\pm\omega_2 \}$.
  A still more complicated one is distinction of 25 states in the
  set $C_{25}$, doubly surrounding the vacuum. 
  In this manner, we consider distinction of the states in the set
  $C_{1+4n(n-1)}$ which is 
  a finite subset of a von~Neumann lattice. 
  It approaches the whole von~Neumann lattice as $n\to \infty$.
  We denote by $S$ the area of fundamental region of lattice
  corresponding to $C_{1+4n(n-1)}$ (we assume $S>0$). 
  The family 
  $C_{1+4n(n-1)}$ is linearly independent so that by Theorem~\ref{thm1} 
  it is uniformly 
  distinguishable with uniform success probability $q_n>0$. 
  The $q_n$ satisfies $q_1\geq q_2\geq\cdots>0$. 
  Thus there exists a finite limit $\lim_{n\to\infty}
  q_n\geq q$. 
  Here, 
  $q$ is the uniform success probability for the whole lattice 
  which is positive when $S>\pi$ and vanishes otherwise. 
  The behavior of $q_n$ for smaller $n$ may be  of practical interest, 
  and the asymptotic behavior for large $n$ may be of theoretical
  interest.

  %%%%%%%%%%%%%%%%%%%%%%%%%%%%%%%%%%%%%%%%%%%%%%%%%%%%%%%%%%%%%%%%%%%
  \ack
  This work was supported by JSPS Grant-in-Aid for Scientific Research No. 24540282 and by MEXT-Supported Program for the Strategic Research Foundation at Private Universities “Topological Science”.

  %%%%%%%%%%%%%%%%%%%%%%%%%%%%%%%%%%%%%%%%%%%%%%%%%%%%%%%%%%%%%%%%%%%
  \appendix

  %%%%%%%%%%%%%%%%%%%%%%%%%%%%%%%%%%%%%%%%%%%%%%%%%%%%%%%%%%%%%%%%%%%
 \section{Countability of state family}
 \label{counta}
 In this appendix, we briefly discuss the case that states are not countable.
 We shall show that uncountably many states cannot be distinguishable when $H$ is separable. 
 Note that in the standard formulation of quantum mechanics
 (e.g.~\cite[II-1, Postulate E]{von1932mathematische}) the Hilbert space is assumed to be separable.

 We begin by extending the definition of distinguishability to the case of uncountable states. 
 We cannot define POVM with $\sum_{j\in J} \Pi_j=1$ 
 when the set $J$ of outcomes is uncountable since the sum exceeds
 countable additivity.  
 We therefore go back to the measure-theoretical definition of the POVM. 
 Let $B(H)$ be the set of bounded operators on $H$.

 Let $(J,\mathcal J)$ denote a measurable space, where $J$ is a set 
 and $\mathcal J$ a $\sigma$-algebra on $J$. 
 The map $\Pi:\mathcal{J}\to B(H)$ is called a POVM 
 on $(J,\mathcal J)$ 
 if it satisfies the following conditions~\cite[\S3.1]{davies1976quantum}. 
 \begin{enumerate}
  \item  
                 $\Pi(J)=1$ and $\Pi(\Delta)\geq 0$ for all $\Delta\in\mathcal{J}$. 
  \item
           $\Pi\left(\bigcup_k \Delta_k\right)=\sum_k \Pi(\Delta_k)$ 
           in the sense of 
           weak operator topology
           for all disjoint countable collection
           $\set{\Delta_k}\subset \mathcal{J}$. 
 \end{enumerate}

 We shall redefine distinguishability of states, or extend
 Definition~\ref{def-dist}\ref{i-def-dist}, 
 to include families of uncountable states. 
 %
 %
 %%%%%%%%%%%%%%%%%%%%%%%%%%%%%%%%%%%%%%%%%%%%%%%%%
 \begin{defi}\label{defidistunco}
  A POVM $\Pi$ on $(J,\mathcal{J})$ {\em distinguishes}\/ 
  the family $(\rho_i)_{i\in I}$ of states if the following conditions hold. 
  \begin{enumerate}
   \item
                $J=I\sqcup \set{?}$, and $\{i\} \in \mathcal{J}$ holds for all $ i\in I$.
   \item    
                        For all $i,j\in I$, $\tr[\Pi(\set{i})\rho_j]$ is positive if $i=j$ and vanishes if $i\ne j$. 
  \end{enumerate}
  The states $(\rho_i)_{i\in I}$ are {\em distinguishable}\/ 
  if there exists a POVM $\Pi$ which distinguishes them. 
 \end{defi}

 We show a general relation between $\dim H$ and the
 number of input states
 that is necessary for distinguishability. 
 Here, $\dim H$ is defined as the cardinality of an orthonormal basis of $H$, 
 which is countable if and only if $H$ is separable. 
 %
 %
 %%%%%%%%%%%%%%%%%%%%%%%%%%%%%%%%%%%%%%%%%%%%%%%%%
 \begin{thm}
   \label{thm-card}
  Let $(\rho_i)_{i\in I} $ be a state family of a Hilbert space $H$.
  If $(\rho_i)_{i\in I} $ is distinguishable in the sense of Definition
  \ref{defidistunco},  
  then $\dim H \geq |I|$, where $|I|$ denote the cardinality of the set $I$. 
 \end{thm}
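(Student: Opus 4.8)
The plan is to extract from a distinguishing POVM a biorthogonal system indexed by $I$, and then to bound $|I|$ by $\dim H$ using the elementary fact that the dual family of a biorthogonal system is uniformly separated in norm.

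First I would unwind Definition~\ref{defidistunco}. Put $E_i:=\Pi(\brac{i})$, so that $E_i\geq 0$, $\tr[E_i\rho_i]>0$, and $\tr[E_i\rho_j]=0$ for $j\neq i$. Since $E_i$ and $\rho_j$ are positive, the vanishing of $\tr[E_i\rho_j]=\tr[E_i^{1/2}\rho_jE_i^{1/2}]$ forces $\rho_j^{1/2}E_i^{1/2}=0$, hence $E_i\rho_j=0$; this is the positivity argument already used in Lemma~\ref{lemele}. Likewise $\tr[E_i\rho_i]>0$ gives $E_i\rho_i\neq 0$, so we may pick $w_i\in H$ with $E_i\rho_iw_i\neq 0$ and set $\psi_i:=\rho_iw_i/\|\rho_iw_i\|$. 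This $\psi_i$ is a unit vector with $E_i\psi_i\neq 0$ and, because $E_j\rho_i=0$ for $j\neq i$, with $E_j\psi_i=0$. The number $d_i:=\inr{\psi_i}{E_i\psi_i}=\|E_i^{1/2}\psi_i\|^2$ is therefore strictly positive, so $\phi_i:=d_i^{-1}E_i\psi_i$ is well defined, and self-adjointness of $E_i$ yields $\inr{\phi_j}{\psi_i}=d_j^{-1}\inr{\psi_j}{E_j\psi_i}=\delta_{ij}$. Thus $(\psi_i)_{i\in I}$ is a minimal family with $\|\psi_i\|=1$ and biorthogonal family $(\phi_i)_{i\in I}$, mirroring step~1 of the proof of Theorem~\ref{thm1}.

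It remains to show that such a normalized biorthogonal system has at most $\dim H$ members. If $\dim H$ is finite this is immediate, since a biorthogonal system is linearly independent. If $\dim H$ is infinite, the key point is the separation estimate $\|\phi_i-\phi_j\|\geq\abs{\inr{\phi_i-\phi_j}{\psi_i}}=\abs{1-0}=1$ for $i\neq j$. Decompose $I=\bigcup_{n\geq 1}I_n$ with $I_n:=\brac{i\in I\mid\|\phi_i\|\leq n}$ (this exhausts $I$ because $\|\phi_i\|\geq\abs{\inr{\phi_i}{\psi_i}}=1$). For each fixed $n$ the open balls of radius $1/3$ centered at the $\phi_i$, $i\in I_n$, are pairwise disjoint, and each of them meets a fixed dense subset of $H$; taking the dense set of finite rational linear combinations of an orthonormal basis, whose cardinality is $\dim H$, we obtain $\abs{I_n}\leq\dim H$, and hence $\abs{I}\leq\aleph_0\cdot\dim H=\dim H$.

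I expect the only real difficulty to be conceptual: linear independence of the $\psi_i$ is genuinely too weak here, since a separable Hilbert space admits linearly independent — indeed total — families of cardinality $2^{\aleph_0}$, so the cardinality bound must be extracted from a metric rather than a purely algebraic property. The separation $\|\phi_i-\phi_j\|\geq 1$ of the dual vectors is precisely that property; everything else is the routine verification, familiar from the finite-dimensional theory, that a distinguishing measurement produces an honest biorthogonal family and not merely a linearly independent one.
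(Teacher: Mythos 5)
Your proof is correct, and its second half takes a genuinely different route from the paper. The reduction from mixed to pure states is essentially the paper's: from $\tr[\Pi(\{i\})\rho_j]=0$ you deduce $\Pi(\{i\})\rho_j=0$, pick $w_i$ with $\Pi(\{i\})\rho_iw_i\neq0$, and pass to the unit vectors $\psi_i=\rho_iw_i/\|\rho_iw_i\|$, which the same POVM distinguishes. Where you diverge is the cardinality bound for the resulting pure family. The paper uses minimality of $(\psi_i)_{i\in I}$ and a transfinite (Gram--Schmidt-type) induction over a well-ordering of $I$ to build an orthonormal family $(e_i)_{i\in I}$ inside $\cspan\{\psi_i\mid i\in I\}$, which exhibits $\dim H\geq|I|$ directly. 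You instead extract an explicit biorthogonal family $\phi_i=d_i^{-1}\Pi(\{i\})\psi_i$, observe the separation $\|\phi_i-\phi_j\|\geq|\langle\phi_i-\phi_j,\psi_i\rangle|=1$ (valid since $\|\psi_i\|=1$), and bound a $1$-separated family by the density character of $H$, which equals $\dim H$ in the infinite-dimensional case; the finite-dimensional case follows from linear independence. Both arguments are complete; the paper's buys a concrete orthonormal witness of the dimension bound and stays entirely inside Hilbert-space geometry, while yours trades the transfinite induction for a soft metric counting argument and makes explicit the point (only implicit in the paper's reliance on minimality) that linear independence alone could never give a cardinality bound. Two small remarks: your decomposition $I=\bigcup_nI_n$ by $\|\phi_i\|\leq n$ is superfluous, since the balls of radius $1/3$ around all the $\phi_i$ are already pairwise disjoint and each meets the dense set regardless of the norms; and ``rational'' combinations should be read as Gaussian-rational, which does not affect the cardinality count.
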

 \begin{proof}
  The proof is divided into two steps. 

  1 (Case that all $\rho_i$ are pure).
  Let $\rho_i=\ketbra{\psi_i}, \ \psi_i\in H$.
  Then $(\psi_i)_{i\in I}$ is minimal, which can be shown in a similar
  manner to the proof of Theorem~1. 
  For $J\subset I$, let $K_J:=\cspan\brac{\psi_i\,|\,i\in J}$. 
  Minimality of $(\psi_i)_{i\in I}$ implies that for $k\notin J$, 
  there is a normalized vector $e_k$ which is orthogonal to $K_J$ and
  $\psi_k\in \C e_k+K_J$. 
  Using transfinite induction on $I$, we can construct a orthonormal
  family $(e_i)_{i\in I}$.
  Therefore, $\dim H\geq |I|$.

  2 (General case).
  Assume a POVM $\Pi$ distinguishes $(\rho_i)_{i\in I}$.
  Let $\Pi_j$ denote $\Pi(\set{j})$.
  As in the proof of 
  Proposition~\ref{prop-mixed}, 
  $\Pi_j^{1/2}\rho_i=0$ 
  holds 
  for all $i,j\in I$ 
  with 
  $i\ne j$ 
  and $\Pi_i^{1/2}\rho_i\ne 0$ for all $i\in I$.
  Hence, for each $i\in I$, there exists $ \phi_i\in H$ such that $\Pi_i^{1/2}\rho_i \phi_i\ne0$.
  Define $\psi_i=\rho_i\phi_i/\|\rho_i\phi_i\|$.
  Since 
  \begin{eqnarray*}
   \tr\left[ \Pi_j \ketbra{\psi_i} \right]
        &=\frac{1}{\|\rho_i\psi_i\|^2} \, \left\| \Pi_j^{1/2}\rho_i\phi_i\right \|^2
  \end{eqnarray*}
  for all $i,j\in I$,
  $\Pi$ distinguishes $(\ketbra{\psi_i})_{i\in I} $.
  Therefore, the claim follows from the first step.
 \end{proof}

 Theorem~\ref{thm-card} 
 shows that 
 separability of $H$ forces any distinguishable family of 
 states to be countable.
 On the other hand, 
 there exist distinguishable family of uncountable states in a 
 non-separable Hilbert space.
 A simple example is 
 the following. 
  
 \begin{exam}
  Let $H$ be a non-separable Hilbert space,
  and $(e_i)_{i\in I}$ be 
  a complete orthonormal system of $H$,
  which is uncountable.
  Let $(I,2^I)$ be a measurable space,  
  where $2^I:=\set{ \Delta | \Delta \subset I}$. 
  Define $\Pi:2^J\longto B(H)$ by 
  \begin{eqnarray*}
   \Pi(\Delta)&:=&\sum_{i\in \Delta} \ketbra{e_i},\qquad
        \Delta \in 2^I.
  \end{eqnarray*}
  The sum converges in the strong operator topology. 
  Then $\Pi$ is a well-defined POVM on $ (I, 2^I) $ and 
  (perfectly) distinguishes $(e_i)_{i\in I} $. 
 \end{exam}
 In the example above, 
 $\Pi$ satisfies ``uncountable additivity'' $\sum_{j\in J}
 \Pi(\set{j})=1=\Pi(J)$. 
 However, when we define a measure $\mu_\psi:2^J \longto
 [0,\infty)\subset \bR$  
 for a vector $\psi\in H$ as
 $\mu_\psi(\Delta)=\braket{\psi,\Pi(\Delta)\psi}$,
 then ``uncountable additivity'' of $\mu_\psi$ reduces to a trivial matter
 because $\sum \braket{\psi,\Pi(\set{i})\psi}=\|\psi\|^2<\infty$ and $\set{ j\in J | \mu_\psi(\set{j}) \ne 0 }$ is countable.

 %%%%%%%%%%%%%%%%%%%%%%%%%%%%%%%%%%%%%%%%%%%%%%%%%%%%%%%%%%%%%%%%%%%
 \section{Proof of Proposition~\ref{lembi}}
 \label{proof-lembi}
 We give a proof of Proposition~\ref{lembi} for convenience of the reader.
 We do so by showing a slightly generalized proposition below. 
 For a normed space $X$, let $X^*$ denote the topological dual of $X$, which consists of continuous functionals. 
 In this case, we say $(\psi_i)_{i\in I} $ of $X$ and $(\phi_i)_{i\in I} $ of $X^*$ are biorthogonal 
 when $\phi_i\psi_j=\delta_{i,j}$ for all $i,j\in I$.
 %
 %
 %%%%%%%%%%%%%%%%%%%%%%%%%%%%%%%%%%%%%%%%%%%%%%%%%
 \begin{prop}
   \label{prop-mixed}
  Let $X$ be a normed space and 
  $(\psi_i)_{i\in I}$ be a family of vectors in $X$.
  $(\psi_i)_{i\in I} $ is minimal if and only if $(\psi_i)_{i\in I} $ admits a biorthogonal family.
  If the condition holds and $\cspan\set{\psi_i|i\in I} =X$, 
  then the biorthogonal family is unique.
 \end{prop}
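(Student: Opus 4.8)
The plan is to prove the two implications of the equivalence separately, with the Hahn--Banach theorem supplying the only nontrivial ingredient, and then to settle uniqueness by a density argument.

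First I would do the easy direction. If $(\psi_i)_{i\in I}$ admits a biorthogonal family $(\phi_i)_{i\in I}\subset X^*$, then each $\phi_i$ annihilates $\lspan\set{\psi_j|j\ne i}$ and, being continuous, annihilates the closure $\cspan\set{\psi_j|j\ne i}$ as well; since $\phi_i\psi_i=1\ne0$, the vector $\psi_i$ cannot belong to that closed subspace, so $(\psi_i)_{i\in I}$ is minimal. Nothing beyond continuity of the functionals is used here.

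For the converse, fix $i\in I$ and put $M_i:=\cspan\set{\psi_j|j\ne i}$, a closed subspace of $X$ which, by minimality, does not contain $\psi_i$. The standard corollary of Hahn--Banach on separating a point from a closed subspace of a normed space yields $g\in X^*$ with $g|_{M_i}=0$ and $g(\psi_i)=\operatorname{dist}(\psi_i,M_i)>0$; setting $\phi_i:=g/g(\psi_i)$ gives a continuous functional with $\phi_i\psi_i=1$ and $\phi_i\psi_j=0$ for all $j\ne i$. Carrying out this construction for every $i\in I$ produces a family $(\phi_i)_{i\in I}$ biorthogonal to $(\psi_i)_{i\in I}$.

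For uniqueness under the hypothesis $\cspan\set{\psi_i|i\in I}=X$: if $(\phi_i)_{i\in I}$ and $(\phi_i')_{i\in I}$ are both biorthogonal to $(\psi_i)_{i\in I}$, then for each fixed $i$ the functional $\phi_i-\phi_i'$ vanishes on every $\psi_j$ (giving $1-1=0$ when $j=i$ and $0-0=0$ otherwise), hence on $\lspan\set{\psi_j|j\in I}$, and by continuity on its closure $X$; therefore $\phi_i=\phi_i'$. I do not anticipate a genuine obstacle: the only substantive tool is the Hahn--Banach separation corollary, which holds in arbitrary normed spaces without any completeness assumption, so allowing $X$ incomplete causes no difficulty. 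The one point requiring care is that $M_i$ must be taken as the \emph{closed} span, since otherwise Hahn--Banach need not apply and the resulting functional need not be continuous --- this is precisely why minimality, rather than mere linear independence, is the correct hypothesis.
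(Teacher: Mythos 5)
Your proposal is correct and takes essentially the same route as the paper: the nontrivial direction rests on Hahn--Banach together with the fact that minimality makes $\operatorname{dist}\bigl(\psi_i,\cspan\{\psi_j \mid j\ne i\}\bigr)$ positive, and your easy direction and uniqueness arguments coincide with the paper's. The only cosmetic difference is that you invoke the point-versus-closed-subspace separation corollary directly, whereas the paper defines the functional on $\lspan\{\psi_j \mid j\in I\}$ by the biorthogonality relations, bounds its norm by the reciprocal of that distance, and then extends by Hahn--Banach --- the same mechanism packaged differently.
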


 \begin{proof}
  When $X=0$, the statement is trivial.
  We assume $X\ne0$ in the following.
  Let $Y=\lspan\set{\psi_i | i\in I},\, Y_i=\lspan\set{\psi_j | i\ne j \in I}$
  and these norm closure $\overline{Y},\,\overline{Y_i}$, respectably.

  First, suppose $(\psi_i)_{i\in I}$ is minimal.
  Because $(\psi_i)_{i\in I}$ is linearly independent,
  one can define 
  for each $i\in I$ 
  a linear functional $\phi'_i:Y\longto\mathbb{C}$ 
  by the condition
  $\phi'_i \psi_j=\delta_{i,j}$, where $i,j\in I$.
  We shall show $\phi'_i$ is continuous on $Y$.
  One has 
  \begin{eqnarray*}
    \|\phi'_i\|
    :=\sup_{ \psi\in Y \setminus \{0\} }
    \frac{|\phi'_i\psi|}{\|\psi\|} 
    =\frac{1}{\inf_{\psi\in \psi_i+Y_i} \|\psi\| }. 
  \end{eqnarray*}
  The denominator is positive 
  because $(\psi_i)_{i\in I} $ is minimal.
  Therefore,  
  $\|\phi'_i\|<\infty$ and 
  the linear functional $\phi'_i$ 
  is continuous on $Y$.
  Due to the Hahn-Banach theorem, $\phi'_i$ admits a continuous
  extension to the whole space $X$.
  The extension, 
  which we denote by $\phi_i$, 
  belongs to $X^*$. 
  By construction, 
  $(\phi_i)_{i\in I}$ is a biorthogonal family for $(\psi_i)_{i\in I} $.

  Second, let $(\phi_i)_{i\in I}$ be a biorthogonal family for
  $(\psi_i)_{i\in I}$. 
  Suppose 
  that $(\psi_i)_{i\in I}$ were not minimal.
  Then there would exist $i\in I$ such that $\psi_i \in
  \overline{Y_i}$,  i.e.,  
  there would exist a sequence 
  $(\xi_n)_{n\in\N}$ 
  in $Y_i$ such
  that $\xi_n \to \psi_i$. 
  Since $\phi_i\in X^*$ is continuous and $\phi_iY_i=0$, 
  one would have 
  $
  1
  =\phi_i\psi_i
  =\phi_i\lim_n \xi_n
  =\lim_n \phi_i \xi_n
  =\lim_n 0 
  =0,
  $
  a contradiction.

  For the proof of the uniqueness part of the proposition,
  assume $\overline{Y}=X$ and 
  $(\psi_i)_{i\in I}$ has 
  two biorthogonal families
  $(\phi_i)_{i\in I}$ and $(\phi_i')_{i\in I}$. 
  Because the functional $\phi_i-\phi_i'$ 
  is continuous on $X$ and vanishes on $Y$, 
  one has $\phi_i=\phi_i'$  on $\overline{Y}=X$. 
 \end{proof}

 When $X=H$ is a Hilbert space,
 the Riesz theorem establishes a conjugate isomorphism $H^*\simeq H$ 
 so that 
 a biorthogonal family $(\phi_i)_{i\in I}$ of $H^*$ can be regarded as a family in $H$.

%%%%%%%%%%%%%%%%%%%%%%%%%%%%%%%%%%%%%%%%%%%%%%%%%%%%%%%%%%%%%%%%%%%

\section*{References}

 \end{document}